\documentclass[a4paper,twoside,12pt,english]{article}

% 12pt

\setlength\oddsidemargin{39pt} \setlength\evensidemargin{39pt}
\setlength\marginparwidth{68pt}
\usepackage[utf8]{inputenc}

\usepackage[english]{babel}
\hyphenation{com-pari-son con-tinu-ous maxi-min mini-max}

\makeatletter

\newcommand\dateymd{\number\year, \ifcase\month\or
January\or February\or March\or April\or May\or June\or
July\or August\or September\or October\or November\or
December\fi, \number\day}
\newcommand\printtime{%
\c@hours=\time \divide\c@hours by60
\c@minutes=\c@hours \multiply\c@minutes by-60
\advance \c@minutes by \time
\ifnum\c@hours<10 0\fi\the\c@hours:%
\ifnum\c@minutes<10 0\fi\the\c@minutes}
\makeatother

\usepackage[OT1]{fontenc}

\usepackage{fancyhdr}
\pagestyle{fancy}
\fancyhead{}
%\fancyhead[CO]{\textsc{Path-revised approval choice},\quad \S\,\thesection}
\fancyhead[CO]{\textsc{Path-revised approval choice}}
\fancyhead[CE]{\textsc{R.\ Camps,\, X.\ Mora,\, L.\ Saumell}}
\fancyhead[RO]{\thepage}
\fancyhead[LE]{\thepage}

\fancyfoot{}
\fancyfoot[RO]{}
\fancyfoot[LE]{}
%\fancyfoot[RO]{\footnotesize\dateymd; \printtime} % eliminar en versió definitiva
%\fancyfoot[LE]{\footnotesize\dateymd; \printtime} % eliminar en versió definitiva

\usepackage{setspace}

%X
\fancypagestyle{footonly}{
\fancyhf{}

}

\usepackage{titlesec} %X
\titleformat{\part}[block]{\large\bfseries\center}{Part \thepart}{1em}{}{\bigskip} %X
\titlespacing{\part}{0pt}{1.2\bigskipamount}{1.\bigskipamount} %X

\makeatletter
\renewcommand{\paragraph}{%
  \@startsection{paragraph}{4}%
%  {\z@}{3.25ex \@plus 1ex \@minus .2ex}{-1em}%
  {\z@}{2.75ex \@plus .5ex \@minus 1ex}{-1em}%
  {\normalfont\normalsize\bfseries}%
}
\makeatother

\setcounter{secnumdepth}{4}

\usepackage{bm} % bold math
\usepackage{amssymb}
\usepackage{eucal}
\usepackage{amsmath}
\usepackage{amsthm}
\usepackage{enumerate}
\usepackage{enumitem}
\usepackage{caption} %setup caption width
\usepackage{float} % per a forçar la situació de les figures

\usepackage[colorlinks,linkcolor=blue,citecolor=blue,urlcolor=blue,breaklinks]{hyperref}
\usepackage{breakurl}
\usepackage{natbib}

\parskip=.75\smallskipamount
\mathsurround=1pt

%%%%% THEOREM ENVIRONMENTS

\theoremstyle{plain}
\newtheorem{lemmaN}{Lemma}

\newtheorem*{lemma}{Lemma}
\newtheorem*{theorem}{Theorem}
\newtheorem*{corollary}{Corollary}

\newtheorem*{theorem*}{Theorem}

%\newtheoremstyle{TheoremNum}
%    {}{}		%%% space between body and thm
%    {\itshape}	%%% Thm body font
%    {}			%%% Indent amount (empty = no indent)
%    {\bfseries}	%%% Thm head font
%    {.}			%%% Punctuation after thm head
%    { }			%%% Space after thm head
%    {\thmname{#1}\thmnote{ \bfseries #3}}	%%% Thm head spec
%\theoremstyle{TheoremNum}
%\newtheorem{thmn}{Theorem}
%\newtheorem{propn}{Proposition}
%\newtheorem{corn}{Corollary}

%\newcommand\conpar{\leavevmode\par}
\newcommand\conpar{\hskip-.5em}

\newcommand\remark{\smallskip\noindent\textit{Remark}.\hskip.5em}

%%%%% NOTACIÓ

\newcommand\ie{i.\,e.~}

\newcommand\shortquote[1]{``#1''}
\newcommand\system[1]{\textsl{#1}}
\newcommand\sigles[1]{\unskip}

\newcommand\df{\bfseries}
\newcommand\dfc[1]{\,{\df#1}\,}
\newcommand\dfd[1]{\,{\df#1}\hskip1pt}

\newcommand\ensep{\unskip\hskip.65em\ignorespaces}

\newcommand\xsmallskip{\vskip.25\smallskipamount}

\newcommand\ist{A}				% conjunt d'items
\newcommand\dft{0}				% default
	% conjunt d'items menys el default
%\newcommand\winners{\textsl{P\hskip-0.4pt RA\hskip-0.8pt C}} % conjunt de guanyadors ALT Z, G, \Gamma
\newcommand\winners{X} % conjunt de guanyadors ALT Z, G, \Gamma
\newcommand\better{\!>\!}
\newcommand\aprbar{\,|\,}
\newcommand\rkrel{\mathrel{\smash{\succeq\kern-5pt\raise2pt\hbox{\mathsurround0pt${}^*$}\kern-2pt\null}}}
\newcommand\rkrelsub[1]{\mathrel{\smash{\succeq\kern-4pt\raise2pt\hbox{\mathsurround0pt${}^*$}\kern-4pt\raise-1pt\hbox{\mathsurround0pt${}_{#1}$}\kern-2pt\null}}}

\newcommand\profilex{\Omega}
\newcommand\llullx{\Gamma}

\newcommand\gerelsub[1]{\mathrel{\smash{\succeq\kern-3pt\raise-1pt\hbox{\mathsurround0pt${}_{#1}$}\kern-1.5pt\null}}}

		% negació
%\newcommand\good{g}
%\newcommand\orv{v}				% original valuation
%\newcommand\urv{v^*}				% upper revised valuation (estrelles)

% labels i matriu de Llull:
\usepackage{colortbl}
\newcommand\skl[1]{{\,$#1$\,}}
\newcommand\labelcell[1]{\cellcolor[gray]{0.8}\makebox[.9em][c]{\skl{#1}}}
\newcommand\hlinestrut{\hline\rule{0pt}{2.5ex}}

\newcommand\onehalf{\leavevmode\hbox{\footnotesize$1\over 2$}}
\newcommand\strutoh{\rule[-7pt]{0pt}{16.5pt}}

% PROVISIONAL:
\newcount\csjcom
 {\global\advance\csjcom by 1%
 \vskip\medskipamount\bgroup\leftskip2\parindent\noindent\hbox{\textsf{NOTA \the\csjcom.} }}%
 {\par\egroup\vskip\medskipamount}
% :PROVISIONAL

\selectlanguage{english}

\begin{document}

\thispagestyle{empty}
\null\vskip-20mm\null
\begin{center}
\hrule
\vskip7.5mm
\large
\textbf{\uppercase{Choosing by means of}}\par
\vskip2pt
\textbf{\uppercase{approval-preferential voting.}}\par
\vskip2pt
\textbf{\uppercase{The path-revised approval choice}}\par
\vskip4pt
\textsc{Rosa Camps,\, Xavier Mora \textup{and} Laia Saumell}
\par
Departament de Matem\`{a}tiques,\break
Universitat Aut\`onoma de Barcelona,\break
Catalonia
\par\medskip
\texttt{xmora\,@\,mat.uab.cat}
\par\medskip
%\dateymd; \printtime
31st October 2014;\,
%revised 7th October 2015
%revised 5th May 2016
%revised 3rd December 2016
%revised 7th June 2017
%revised 20th April 2018
%revised 17th January 2019
revised 29th February 2020
\vskip7.5mm
\hrule
\end{center}

\null\vskip-7.5mm\null

\onehalfspacing

\begin{abstract}
Approval-preferential voting is problematical since it combines two different kinds of information that could by themselves lead to different choices. This article analyses the problem and studies a new proposal to deal with it.
% puts forward a new proposal >> pursues the study of a recent new proposal
The existing methods are overviewed with special attention to several desirable properties that are not always met. Particular emphasis is made on certain rather unknown views of Condorcet about this subject. The proposed procedure definitely aims for a choice in the spirit of approval voting. However, it still makes use of the preferential information so as to improve the approval one. This is done by means of the path scores, in common to Schulze's method for preferential voting. The resulting method, that we call path-revised approval choice, is shown to enjoy several good properties. In particular, it fulfils in a well-defined sense Condorcet's latest view that a surely good option should prevail over a doubtfully best one.
% idea of making sure that a good option is chosen rather than aiming for the best option but not being so sure about it.
\end{abstract}
\vskip3mm

%\rightline{Le mieux est le mortal ennemi du bien}
%%\rightline{(Montesquieu, 1720/1755, \textit{Mes Pensées}: 1007, XIII)}
%\rightline{(Montesquieu, 1720/1755)}

\rightline{Il meglio è nemico del bene}
\rightline{(Orlando Pescetti, 1603, \textit{Proverbi Italiani}\rlap{)}}

\vskip10mm\hrule\vskip10mm
\newpage

\onehalfspacing

%\thispagestyle{footonly} %X
%\null\vskip-2cm\null %X
%\part*{Introduction} %X

\null\vskip-2cm\null %X
\setcounter{section}{-1}
\section{Introduction}

Suppose that two alternative proposals $a$ and $b$
% about a certain matter / of public interest
are submitted to a decision-making body formed by 100~members.
Each member is asked whether he approves of $a$ or not, likewise about $b$, and whether he prefers $a$ to $b$ or viceversa.
Assume that the votes are as follows:
\begin{equation}
\label{eq:ex1}
25:~a \aprbar b\,,\quad 
35:~b \better a \aprbar,\quad
40:~|\, b \better a\,,
\end{equation}
where the numbers mean quantities of voters,\,
$x \better y$ means that $x$ is preferred to $y$,\,
and a bar indicates that the options at its left are approved and those at its right are disapproved.
% For example, ``$35:~b \better a \aprbar$'' means that 35~voters approve of both $a$ and $b$ and they prefer $b$ to $a$.
In other words, \eqref{eq:ex1} means that 25~voters approve of $a$ and disapprove of $b$,
35~prefer $b$ to $a$ while approving of both, and 40 disapprove of both while still preferring $b$ over $a$.
Which decision should be taken as a result of these votes? %  in such a situation

Notice that proposal~$a$ is approved by a majority of 60\%, whereas $b$ is disapproved by a majority of 65\%.
From this point of view, it makes sense to go for proposal~$a$.
However, one can also check that a majority of 75\% prefers $b$ to $a$, which advises against choosing~$a$.
Should one let the preferential information prevail? Or should it be dismissed in favour of the approval information?

The situation is very much that of a Condorcet cycle of collective preferences~(see for instance \citealt[\S\,2.3]{laslier:2004}).
% CANVI abril 2016, en lloc de \cite[Ch.\,9]{tideman}.
In~fact, approving a proposal can be interpreted as preferring it to a certain implicit option that we shall refer to as default.
This default option can be identified with the \textit{status quo}, \ie leaving things as they are.
Alternatively, one could think of it as an abstract boundary between good and bad options.
Anyway, under such an interpretation
% we have three options and
the approval-preferential vote \eqref{eq:ex1} about $a$ and $b$
can be viewed as the following purely preferential vote about $a$, $b$ and $\dft$,
where $\dft$ denotes the default option:
\begin{equation}
\label{eq:ex1dft}
25:~a \better \dft \better b\,,\quad 
35:~b \better a \better \dft\,,\quad
40:~\dft \better b \better a\,.
\end{equation}
% RENUMERAR MÉS AVALL. FET
From this perspective, 
the remarks made in the preceding paragraph correspond indeed to a Condorcet cycle,
%we are indeed in front of a Condorcet cycle,
%the collective preferences of example~\eqref{eq:ex1} form a cycle,
namely $a \better \dft \better b \better a.$ 
% where $\dft$ denotes the default option.
Such cycles are not just an academic possibility, but they have occurred in practice,
as in the two-choice referendum that was held in the canton of Bern on 28~November 2004~(\citealt{bochsler:2010}; see also \S\,\ref{par:bern2010} below).
% NOTE The voting method used in that referendum allows for cycles to be present already in the individual votes;
% however, the problem of how to resolve cyclic collective preferences is the same no matter the origin of such cycles

This point of view leads to consider the different methods that have been devised for dealing with preferential voting
and the problem of cyclic collective preferences (see for instance \citealt[Ch.\,13]{tideman}).
However, most if not all of these methods abide by the premise of neutrality, \ie an equal treatment of all options,
% ser més formal? (equivariància per permutacions)
which does not apply to the present case. In fact, be it the status quo or an abstract boundary between good and bad options,
the default option plays a special role that justifies a different treatment.
% exemple: compar un cotxe nou, o un mòbil nou
% opcions (arbre): 1 status quo, 2 canvi, 2.1 opció a, 2.2 opció b

So the question remains of how to make a choice through approval-preferen\-tial voting.
In this article we will deal with this problem in a general setting.
We will start by overviewing the existing methods. This is done in Section~1,
which is also aimed at pointing out several desirable properties that are not always satisfied.
This will motivate a method that made its appearance in \citealt[\S\,7.2]{chr} (in a different but equivalent formulation)
and whose study is deepened here.
We will refer to it and its outcomes ---possibly more than one in the event of certain ties---
as the path-revised approval choice(s) (\citealt{chr} use the terms `goodness method' and `goodness winners').
This method will be developed in Section~2. As we will see, it has several good properties,
most of which are obtained for the first time in this article.
One of these properties is very much in the spirit of Condorcet’s last views on elections
---rather unknown and in conflict with his celebrated earlier principle---
namely that \emph{a surely good option should prevail over a doubtfully best one}.
%The mathematical proof of this property is given in Appendix~A.
%The other claimed properties are proven in the online supporting information (pending reference).
The mathematical proofs of the new properties are given in appendix~A.
% arxiv

By the way, as we will see in \S{2.3}, in the particular case of \eqref{eq:ex1} the path-revised approval choice will be the default option, i.\,e.~the \textit{status quo}.

\section{Existing methods and desirable properties}

\subsection{The Swiss multiple-choice referendum procedures}
\label{ssec:Swiss-procedure}

An interesting real example of approval-preferential voting is provided by multiple-choice referenda as they are conducted in the Swiss Confederation and its cantons. More specifically, approval-preferential voting arises in two cases: (a)~popular initiatives, in which case the government can put forward a counter-proposal; and~(b)~the so-called `constructive' referenda ---in use in only a few cantons---
% Bern, Nidwalden
where a proposal from the government can be followed by one or more counter-proposals from groups of voters.

In these cases, the voters are asked two sets of questions. The main set is about every proposal by itself to see whether the voter approves it or not. For a proposal to be adopted it must receive more approvals than disapprovals. In the event of more than one proposal satisfying this condition, a choice is made on the basis of the answers to the second set of questions, where the voter is asked to express his preferences between the different proposals.

In the more frequent case of two proposals, the second part reduces to a single question, namely which of the two proposals is preferred to the other. This information determines which option is chosen when both proposals satisfy the condition of having more approvals than disapprovals. This procedure was put forward by Christoph Haab (\citeyear{haab:1976}) and was adopted at the federal level in 1987 (see \citealt[p.\,165]{lutz-strohmann:1998}).

In example~\eqref{eq:ex1} this procedure chooses proposal~$a$, since it is the only one with more approvals than disapprovals. In fact, $a$ has 60~approvals and 40~disapprovals, whereas $b$ has 35~approvals and 65~disapprovals.

For more than two proposals ---which may be the case in constructive referenda---
% (see, for instance, \cite{nidwalden:2013}).
it can happen that three or more of the proposals satisfy the condition of having more approvals than disappovals.
In this case, the problem arises of how to choose between them on the basis of the preferential information.
In the cantons of Bern and Nidwalden, this is done by means of the Copeland rule (\citealt[p.\,206--209]{tideman}) restricted to the set of approved options.
More specifically, the choice is made on the basis of the number of victories
% preference victories / victories in the preference questions
against the other approved options ($x$ is considered to beat $y$ when there are more voters who prefer $x$ to $y$ than viceversa).
If~necessary, a tiebreaker is also used.
In Bern, the tiebreaker is the total number of favourable votes in the preference comparisons against all the other proposals \cite[Art.\,139.7]{bern:2012}.
In Nidwalden, the tiebreaker is the number of approvals minus the number of disapprovals \cite[Art.\,44.3]{nidwalden:1997}.
Such procedures will be referred to as \textsl{Swiss Procedures}.

\subsection{The ranking approach: rank all options, including the default one, then take the top one}
\label{ssec:rank-all-options}

Another interesting real case of approval-preferential voting is the voting procedure used by the Debian Project~(\citeyear{debian}; see also \citealt{voss:2012}).

\smallskip
Since 1998, the votes of this organization systematically include a default option to mark the boundary between approved and disapproved options. In fact, their regulations explicitly state that \shortquote{Options which the voters rank above the default option are options they find acceptable. Options ranked below the default options are options they find unacceptable} (\citealt[v\,1.1, \S A.6]{debian}). In practice, the default option is usually described as ``further discussion'' or ``none of the above''.

By the way, the idea of systematically including such an option was already clearly stated by Charles L. Dodgson, alias Lewis Carroll, in the following way:
``where `no Election' is allowable, the phrase should be treated exactly as if it were the name of a candidate'' (\citealt[Chap.\,II]{dodgson:1873}).

\smallskip
Going back to the Debian Project, the procedure that they use to make a choice ---possibly the default option--- is due to Markus Schulze (\citeyear{schulze:2003,schulze:2011}) who introduced it in 1998 in a mailing list about election methods. As it is shown by that author, this procedure admits of several equivalent formulations. One of them corresponds to actually ranking all the options and then taking the top one. We will call this option the \textsl{Path-Top Choice}, since this ranking is based on the so-called \textit{path scores}. These scores will be dealt with in detail in Section~\ref{ssec:path-scores}, for they will also be a crucial tool of our proposal.
%%%%% BLIND

Instead of that particular ranking method, one can consider any other method for selecting a best or topmost option in a preferential vote, with the only peculiarity that one of the options is the default one. Among such methods, one can consider the \textsl{Borda} count \sigles{B}, the method of \system{Condorcet-Kem\'eny-Young} \sigles{CKY}, or the method of \system{Ranked Pairs} \sigles{RP}. As a general reference for these and other methods, we refer the reader to \citealt[Ch.\,13]{tideman}. 
%[See also \cite{erdamaretal:2014} in connection with the application of Condorcet-Kem\'eny-Young to approval-preferential voting]

Except for the Borda count, all of the other methods mentioned in the two preceding paragraphs comply with the Condorcet principle. That is, they choose the Condorcet winner whenever it exists. Recall that a Condorcet winner means an option~$x$ that beats every other option~$y$ in the sense that a majority of voters prefer $x$ to $y$.  

For three options (one of them being the default one) and fully completed ballots,
all of the above-mentioned Condorcet-compliant methods amount to resolving any Condorcet cycle by dropping the weakest, \ie less supported, of the three majoritarian views in conflict.
In~the case of example~(\ref{eq:ex1}), this means dropping the view of approving $a$,
which leads to adopting the default option~$\dft$.
% and concluding that both $a$ and $b$ should be disapproved.

So these methods allow strongly supported preferences to overturn the approval information.
% This seems reasonable enough, except that we must be careful

\subsection{Should a small preference differential prevail over a large approval differential?}
\label{ssec:should-a-small}

However, we might be giving too much importance to preferences. % compared with approval.
Consider, for instance, the following example (from \citealt[eq.\,(109)]{chr}):
\begin{equation}
\label{eq:ex4e}
\onehalf+\epsilon: \ a \better b \aprbar,\qquad \onehalf-\epsilon: \ b \aprbar a\,,
\end{equation}
where the numbers of voters are normalized to add up to one
and $\epsilon$ is a small positive quantity
(before the normalization the votes could be, for instance,
$500\,001: \ a \better b \aprbar$ and $500\,000: \ b \aprbar a$).
%(for instance, the difference $2\epsilon$ could correspond to a single voter
%and the total amount of voters could be one million and one).
As one can see, both $a$ and $b$ are approved ---\ie preferred to $\dft$--- by a majority of voters;
besides, $a$~is preferred to $b$ also by a majority.
So $a$ is a Condorcet winner, and therefore it will be chosen by the above Condorcet-compliant methods.
However, the majority that approves $a$ and prefers it to $b$ is quite slight,
% the majorities in favour of $a$ are quite slight,
whereas $b$ is approved by a whole unanimity.
So, we are allowing a tiny preference differential to overcome a huge approval differential.

As one can easily check, the Swiss procedures also choose $a$.

Examples like this suggest that
%if the goal is simply making a choice, rather than ranking all the options, then
%if the goal is not ranking all the options, but simply making a choice, then
one should perhaps completely forget about preferences
and take into account only the approval information.
% take into account > use

For later reference, we will refer to this course of action as the \system{Approval Choice} procedure.
More precisely, we will take the general view that not approving an option is not the same as disapproving it,
in which case it is appropriate to abide by the number of approvals minus the number of disapprovals
and to choose the option that maximizes this differential (see for instance \citealt{richelson:1984},
under `differential maximization', as well as \citealt{felsenthal:1989}).
This includes the more classical particular case where the ballots allow to express approvals but not disapprovals,
in which case the above-mentioned difference reduces to the number of approvals.

Notice that in the general case where disapprovals are also expressed
that difference could be negative for all non-default options,
in which case it makes sense to declare a void choice, \ie to choose the default option.

Having said that, it still seems that there should be a sensible way
to refine the approval choice by making use of
% any preferential information that could complement the approval-disapproval one.
any preferential information that could exist as a complement of the approval-disapproval one.
However, it is not clear how should one proceed to blend both kinds of information together.
% the preferential information into the approval-disapproval one.

%\question
%How should one combine approval and preferences for the purpose of making a good choice?
%In the case of CKY, one can modulate the relative importance given to approval and preference by means of a parameter. But, which value should one choose?
%The preferential information should be used only in order to help towards the goal of making a choice.

\subsection{Condorcet's last views on elections}
\label{ssec:Condorcet-last-views}

The point that we are leading to % ALT raising
was formulated by Condorcet in the following way \,(\citealt[\S\,XIII, p.\,307]{condorcet:1789}; \citealt[p.\,177--178]{mclean-hewitt:1994}; emphasis is ours):
\begin{quote}
\leavevmode\llap{``}%
It is generally more important to \emph{be sure} of electing men who are \emph{worthy} of holding office than to have \emph{a small probability} of electing \emph{the worthiest man}.\rlap{''}
\end{quote}
The latest works of Condorcet on voting and elections, from 1788 to his death in 1794,
%\cite{condorcet:1788,condorcet:1789,condorcet:1793} ?
are indeed dominated by this idea and by 
the aim of being able to deal with a large number of candidates,
in which case paired comparisons become rather cumbersome
(\citealt{crepel:1990,mclean-hewitt:1994}). % també McLean+Urken ?

Concerning the meaning of `being sure' and `probability', in another place Condorcet says 
the following \,(\citealt[\S\,XIII, p.\,193]{condorcet:1788}; \citealt[p.\,139]{mclean-hewitt:1994}):
\begin{quote}
\leavevmode\llap{``}%
We consider a proposition asserted by 15~people, say, more probable than its contradictory asserted by only 10.\rlap{''}
\end{quote}
So he identifies the probability of a proposition ---such as ‘$x$ is worthy’--- with the support for it in the vote.

%% SUPRIMIT:
%The two preceding quotes from Condorcet can be viewed as referring to two different kinds of probability. In fact, the first quote can be interpreted as referring to the probability of collectively adopting a certain proposition~$p$ of the type~`\hskip-\mathsurround$x$~is worthy'~or~`\hskip-\mathsurround$x$~is the worthiest'~under the assumption that this proposition is true. In contrast, the second quote is definitely about the probability of~$p$ being true after knowing that it has been adopted (by a certain number of votes against the opposite of $p$). The first kind of probability is the subject matter of Condorcet’s celebrated jury theorem and its extensions to more than two options (see, for instance, \citealt{laslier:2004} and \citealt{young}). The second kind is related to the former through Bayes theorem.~However, a proposition~$p$ of the above-mentioned types does not lend itself easily to being checked for truth or falsehood. This leads to regarding its truth or falsehood as two opposite hypotheses, and to viewing its probability, whose value is obtained by means of Bayes' theorem, as a bare degree of belief.
%
%As we have already said, in this connection we will follow a different approach where degrees of belief are dealt with in a way that does not rely on the standard probability theory.
%% and seems particularly appropriate for matters where truth or falsehood are rather elusive.

In the same spirit as example~(\ref{eq:ex4e}), Condorcet gives the following one
(\citealt[p.\,34--35]{crepel:1990}; \citealt[p.\,241]{mclean-hewitt:1994}):
\begin{equation}
\label{eq:exCondorcet}
5: a \better c \better ... \aprbar ...,\qquad 4: b \better c \better ... \aprbar ...,
\end{equation}
where he assumes a large number of candidates and, although he does not use approval bars, he explicitly says that all voters consider $c$ worthy of the place.
%%%%% UH No és així, sinó que s'ho pregunta: ``N'est-il vrai...?''
So~$a$ is considered the worthiest candidate by a slight majority, but $c$ is considered worthy by unanimity,
and so it might be a more appropriate choice than $a$
(especially if the latter were disapproved by the second set of voters).
By the way, this and other examples
% (\cite[\S\,XVI.II, p.\,314]{condorcet:1789}, \cite[p.\,1981]{mclean-hewitt:1994}
show that Condorcet was accepting the possibility of making a choice different from the Condorcet winner.

\subsection{Condorcet's practical methods}
\label{ssec:Condorcet-practical-methods}

The methods that Condorcet proposed in connection with the preceding ideas
are often referred to as Condorcet's ``practical'' methods.
In general terms, there are two of them.
% his proposals are essentially two.
In both cases, 
the voter is required to produce an ordered list of approved candidates.
% HO HAVIA ENTÈS MALAMENT:
%\ie candidates \shortquote{who might truly be considered worthy of the post}
%(\cite[p.\,202--203]{condorcet:1788}, \cite[p.\,143]{mclean-hewitt:1994}).
Unlike proper approval voting, however,
the length of this list is fixed (\citealt[p.\,203]{condorcet:1788}; \citealt[p.\,143]{mclean-hewitt:1994}):
\begin{quote}
\leavevmode\llap{``}%
It should not be too short, to give a good chance that one of the candidates will obtain a majority, [...] nor should it be too long, [so that] the voters can still complete the list without having to nominate candidates they consider unworthy.\rlap{''}
\end{quote}

%In his first practical proposal, formulated in 1788 \cite[Article~V, p.\,193--211]{condorcet:1788}
%(translated in \citealt[p.\,139--147]{mclean-hewitt:1994}), 
%Condorcet chooses the most approved candidate,
In his first practical proposal, formulated in 1788, 
Condorcet chooses the most approved candidate,
conditioned to having obtained a majority,
and the preferential information is used only in the event of ties
(\citealt[Article~V, p.\,193--211]{condorcet:1788}; translated in \citealt[p.\,139--147]{mclean-hewitt:1994}).
% Diu que ja es vota així a Itàlia
% i que s'ha planteja de fer el mateix per a l'elecció del president dels USA
If no candidate has a majority of approvals, then he simply proposes to run a second round after having asked the voters to extend their lists with a certain number of additional candidates.
So this proposal was very much in the spirit of approval voting.
% except for the fact that the voter is asked for a prescribed number of candidates.

% ESPERIT D'APPROVAL VOTING
%\shortquote{We cannot expect the voters ---especially if there are a great many of them and if they are not elected to this function--- to establish carefully an order of preference amongst these 10 men} \cite[p.\,35, eng.~transl.\ p.\,241--242]{condorcet:1793}

In his second and final practical proposal,
%formulated in 1789 \cite{condorcet:1789}
%and yet in 1793 \cite{condorcet:1793a,condorcet:1793b,condorcet:1793},
formulated in 1789 and yet, with some variations, in 1793,
Condorcet makes a more substantial use of the preferential information
(\citealt{condorcet:1789,condorcet:1793}).
More specifically, his last work 
contains the following wording (within a more complex multiround procedure;
\citealt[p.\,41--42, \S\,{VI}]{crepel:1990}; \citealt[p.\,249--250, \S\,{VI}]{mclean-hewitt:1994}):
\begin{quote}
\leavevmode\llap{``}%
If one candidate has the absolute majority of first votes, he will be elected.\ensep
If one candidate has the absolute majority of first votes and second votes together, he will be elected. If several candidates obtain this majority, the one with the most votes will be preferred.\ensep
If one candidate has the absolute majority of the three votes together, he will be elected, and if several candidates obtain this majority, the one with the most votes will be preferred.\rlap{''}
\end{quote}
By the context it is clear that the number of first (resp.\ second) votes means the number of ballots where that candidate is ranked first (resp.\ second). Here Condorcet had limited the preferential vote to three candidates, but in the following round [\textit{ibidem}, \S\,{VIII}] he extends this rule to preferential votes that list six candidates.

%Equivalently, this amounts to successively applying the two following criteria: (1)~Look for the options whose median rank is lowest; (2)~Among them, select the one(s) that ...

Except for secondary variations, this idea spread and/or was rediscovered several times.
Shortly after Condorcet's proposal, it was adopted in Geneva, where it was analyzed by Simon Lhuilier (\citeyear{lhuilier:1794}).
% Later on, in 1909 it was adopted by the city of Grand Junction (Colorado, USA) under the leadership of
Later on, in~the beginning of the twentieth century it was adopted by several American municipalities,
starting from Grand Junction (Colorado) in 1909 under the leadership of James W.~Bucklin (see \citealt{bucklin}).
% “It grew out of the Idaho second choice primary law”

%A few years later, Clarence G.~Hoag
%% from the American Proportional Representation League, 
%reformulated the essentials of the Bucklin system

A few years later, the essentials of the Bucklin system were brought out by Clarence G.~Hoag
% from the American Proportional Representation League, 
(\citeyear{hoag:1914}; see also \citealt[\S\,278]{hoag-hallett:1926}).
In particular, he assumed each voter to give a truncated strict ranking of an arbitrary length
(instead of Bucklin's limitation to a maximum of three grades
with the possibility of several options tied to each other in the third grade).
% This ranking is implicitly assumed to be limited to options that meet the approval of that voter.
In such a setting, the rule to determine the winner remains as it is stated in the above quote of Condorcet,
except for its going on beyond third options.

Much more recently, this procedure has been rediscovered again by Remzi Sanver (\citeyear{sanver:2010})
%under the name of `Majoritarian Approval Compromise'.
%who, on the basis of Sertel's terminology, calls it \system{Majoritarian Approval Compromise}.
who calls it `Majoritarian Approval Compromise'.

On~account of its origins, we will refer to it as the \system{Condorcet-Bucklin} \sigles{CB} procedure.
Unfortunately, the recent literature has not been giving due credit to Condorcet
(unlike \citealt[p.\,7]{hoag:1914}, and \citealt[p.\,490]{hoag-hallett:1926}).
Besides, it is often stated that the Bucklin system requires complete rankings,
which is not the case of the original works cited above.
% e.g. Baumeister \& al
% see also \citealt[p.\,203--206]{tideman}

Truncated rankings can certainly be seen as a form of approval-preferential voting.
In this connection, it will usually be appropriate to understand that
all the options that are included in a truncated ranking meet the approval of the voter.
Depending of the case, it may be appropriate or not to interpret also that these are the only approved options
(see \S\,\ref{ssec:ties-and-incomplete}).
When the voters are obliged to produce complete rankings
one can still apply the Condorcet-Bucklin procedure,
but the notion of approval is left out.

%Truncated rankings can certainly be seen as a form of approval-preferential voting.
%%In fact, one can simply interpret that
%%a voter approves of an option if and only if he includes that option in his truncated ranking.
%In fact, one can simply define
%approval of an option by a voter as his including that option in his truncated ranking.
%% PODRÍEM CITAR:
%%[``those candidates that a voter ranks can be implicitly assumed to be the good ones, and that the voter qualifies as bad those he did not care to rank'' \cite[p.473]{sanver:2010}
%From now on we will interpret truncated rankings in this way.
%When the voters are required to produce complete rankings
%one can still apply the Condorcet-Bucklin procedure, but 
%then the notion of approval disappears.

% When truncated rankings are used, 
% it is natural to interpret that 
% they are confined to approved options.
% (and that they include all the approved options) % DIFFERENT

%As we have been seeing, approval-preferential ballots can take the form of truncated rankings.
%In fact, it often makes sense to interpret such a ranking as an ordered list of approved options.
%This interpretation is particularly suitable in the context of representative elections.

Under this interpretation,
the normal outcome of the Con\-dor\-cet-Bucklin procedure will have the property of being approved by a majority of the voters.
In fact, it will be so unless no option has the property of appearing in a majority of the votes.
In this case, we will take the view that the outcome is not a proper option, but the default one.

When applying the Condorcet-Bucklin procedure to such cases as \eqref{eq:ex1} and \eqref{eq:ex4e},
we will first convert the approval-preferential votes into truncated rankings
% of approved options
by throwing away any information that might appear at the right of the bars.

As one can easily check,
in the case of example~(\ref{eq:ex1}), the Condorcet-Bucklin procedure chooses option~$a$,
which is the only one that is approved by a majority.
In example~(\ref{eq:ex4e}) with $\epsilon>0$ it chooses also option~$a$,
in spite of the fact that this option is approved only by a slight majority,
whereas $b$ is approved by unanimity.
So the Condorcet-Bucklin procedure does not fulfil Condorcet's own last thinking (\S\,\ref{ssec:Condorcet-last-views})
that clearly good options should be favoured over doubtfully good ones.

\vskip-8mm\null %%%%%

%\subsection{Approval-preferential procedures in presidential and parliamentary elections; the issue of monotonicity}
\subsection{The Instant Runoff procedure and the issue of monotonicity}
\label{ssec:apv-instant-runoff}

% Another real example where (possibly) truncated rankings are used and where they can be interpreted as ordered lists of approved options
Another real example where truncated strict rankings are used 
is the Irish presidential election. In 1937 the Constitution of Ireland established that this election shall use the one-seat version of the single transferable vote. % REF \cite[Art.\,12.2.3]
This procedure is known as \system{Instant Runoff} \sigles {IR} or \system{Alternative Vote} (\citealt[\S\,277]{hoag-hallett:1926}; \citealt[p.\,193--195]{tideman}). 
It~proceeds by successive elimination. At every step, each of the options that stay on is evaluated by counting how many ballot papers show that option in first position. If this count is larger than one half of the non-empty ballot papers, the option in question becomes chosen. Otherwise, the option with the lowest count is eliminated and the same process is repeated after having cancelled this option on all ballot papers (which may cause some of them to become empty). 
% REF Smith 2016
% When several options tie for the lowest count, we will consider all of them equally eligible for elimination, which may lead to more than one possible choice.

% The approval meaning of truncation is quite obvious
At present, this procedure is also used in the elections to the Legislative Assemblies of two states\,/\,territories from Australia, namely, New South Wales, since 1980, and the Northern Territory, since 2016, both of which use one-seat constituencies.
% and optional preferential voting
% truncated rankings that are expressed by the electors~(\citealt{farrell-mcallister:2006}).

%This form of voting is referred to as Optional Preferential Voting, as opposed to Full Preferential Voting, where no approval information can be inferred at all.
%% drawn, perceived, reached
%% no approval interpretation is possible.
%As it it argued by the critics of Full Preferential Voting, 
%% In this connection, critics of Full Preferential Voting argue that
%“they should not be forced to indicate preferences, even if they are low preferences, for candidates whose views they find offensive” (Smith 2016).

As it is well known, a major flaw of this system is its lack of \emph{monotonicity}
(see for instance \citealt[p.\,194]{tideman}). %\marginpar{Doron?}
More specifically, they can exhibit the following behaviour:
An option $a$ that is chosen from a certain set of ballots can cease to be chosen if one ballot
% some ballots
of the form $b > a\,...$ is changed into $a > b\,...$, where the dots stand for exactly the same content in both cases.
So $a$ ceases to win because of having been raised in one of the ballots! This is certainly quite undesirable.

% [Deixem per més endavant una definició precisa en termes de les vxy]

One could argue that monotonicity is not important in practice, because a real election is about a particular set of ballots and not about any variation of them. However, a lack of monotonicity can be seen as an evidence that the results are somehow wrong, either before the variation or afterwards (or in both cases).

% (Pot passar en una repetició de vots!)

%\subsection{Upper semicontinuity: no changes without ties in between}
%\subsection{Upper semicontinuity: no changes without ties somewhere in between}
%\subsection{Upper semicontinuity: changes imply ties somewhere in between}
\subsection{Upper semicontinuity: changes imply ties}
\label{ssec:usc}

Like monotonicity, the issue %property/condition
that we are about to raise is also about varying the profile of a vote.
By the \dfc{profile} of a vote we mean a specification of how many ballots were cast of each possible kind,
% for every possible content of a ballot
that is, a specification of the form of \eqref{eq:ex1} or \eqref{eq:ex4e}
% with the only proviso that
% except that, properly speaking, the profile specifies also zero ballots of every content not present in those displays.
(on the understanding that no ballots have been cast of any other kind than shown).
Instead of giving the absolute frequencies, as in~\eqref{eq:ex1}, one can give the relative ones, as in~\eqref{eq:ex4e}.

Let us look at the effect of varying $\epsilon$ in example \eqref{eq:ex4e}. As we have seen, the Swiss procedures choose option $a$ for $\epsilon > 0$ and it is easily checked that they choose $b$ for $\epsilon \le 0$. We claim that this is not right: if the result must change from $a$ to $b$, which are clear choices respectively for $\epsilon=\onehalf$ and $\epsilon=-\onehalf,$
then somewhere in between both $a$ and $b$ should have the same merit, and therefore, a tie between them should be admitted for that value of~$\epsilon$.

This condition is easily violated by the methods that successively apply different criteria, like the Swiss procedures.
% and the Condorcet-Bucklin one
In an attempt to fix it, one can consider modifying them by replacing proper majority requirements by weak ones (greater than or equal to 50\%).
This achieves the desired result in the particular case of \eqref{eq:ex4e}, where one gets $\{\,a,b\,\}$ for $\epsilon = 0.$
However, the problem persists in other examples, like the following one: 
%(4)   a>b>c \aprbar,  1-\epsilon a>b \aprbar c,  4: c a b \aprbar ,  5: \aprbar  b c a,
\begin{equation}
\label{eq:ex4}
\begin{alignedat}{3}
2+\epsilon&:\,a \better b \better c \aprbar,\quad
&3&:\,c \better a \better b \aprbar\,,\quad
&4&:\aprbar b \better c \better a\,,\\[-2pt]
1-\epsilon&:\,a \better b \aprbar c\,,\quad
&1&:\,a \better c \better b \aprbar\,,\qquad
&1&:\,b \aprbar c \better a\,.
\end{alignedat}
\end{equation}
One can check that each of the Swiss procedures of \S\,\ref{ssec:Swiss-procedure} chooses $a$ for small negative $\epsilon$ and $b$ for small positive $\epsilon$; at the boundary it chooses $a$ or $b$ depending on whether proper majority or weak majority is considered, but neither of both variants ever chooses $\{a,b\}$. So neither of them satisfies the condition that we are talking about.

%Example for Condorcet-Bucklin: 
%%(4bis)  3+\epsilon c>a>b,  4-\epsilon b>a>c,  1 a>c>b,
%\begin{equation}
%\label{eq:ex5}
%3+\epsilon:~c \better a \better b\,,\quad 
%4-\epsilon:~b \better a \better c\,,\quad
%1:~a \better c \better b.
%\end{equation}

In general terms, this condition can be formulated in the following way, where the output of a choice procedure is allowed to consist of several options tied to each other. Consider a particular profile and the set of choices $X$ 
% ---be it a single element or not---
that corresponds to it according to the method in question; then, for any other profile close enough to that one ---\ie for close enough frequencies--- the corresponding set of choices should be a subset of $X$. In mathematical terms, such a property is known as `upper semicontinuity'.

One can argue that we are considering the profile as a continuous variable instead of a discrete one, as it is the case in practice. However, and similarly to the above view on monotonicity, the significance of upper semicontinuity is not a matter of practical incidence, but one of consistency with respect to variations of the profile.

\subsection{Taking into account the preferences between non-approved options}
\label{ssec:pba}

The methods mentioned in Sections~\ref{ssec:Condorcet-practical-methods} and \ref{ssec:apv-instant-runoff} 
do not take into account the preferences that a voter could have between his non-approved options. 
% However, 
This is unfair towards the voters who do not approve at all the chosen option and would rather prefer some other of their non-approved options.
% Com ara els partidaris del No en la Consulta Catalana
%The methods of Section~\ref{ssec:rank-all-options} ---after having introduced a default option--- would certainly take into account all preferences.

In order to take into account all preferences one could certainly use the methods of Section~\ref{ssec:rank-all-options}
% after having introduced a default option.
after having augmented the set of options with the default one.
By the way, one can include among them the Condorcet-Bucklin method for complete rankings
(which in the case of~(\ref{eq:ex1}) chooses neither $a$ nor $0$, but $b$!).
However, as we raised in Sections~\ref{ssec:should-a-small} and \ref{ssec:Condorcet-last-views}, this approach is too preference-oriented;
instead, one should give some sort of priority to the approval information.

The existing proposals in this direction are essentially some more elaborated versions of the Swiss procedures that we presented in Section~\ref{ssec:Swiss-procedure}.

One of them is the \system{Preference-Approval Voting} \sigles{PAV} procedure that was put forward by Steven Brams and Remzi Sanver (\citeyear{brams-sanver:2009}; see also \citealt{brams}).
When two or more options are approved by a majority, this procedure restricts the attention to these options
and the preferential information about them is used to single out, if possible, their Condorcet winner;
if this is not possible, then
all options in their so-called top cycle are considered and their approval winner is chosen.
% other rules are applied that make further use of the approval scores.
% està suposant completitud
(For a given subset $S$ of options ---here we are considering the majority-approved ones---
%its weak top cycle is the minimal $X\subseteq S$ such that every option in $X$ beats any option in $S\setminus X$
%in the sense of being preferred by a majority of voters).
its top cycle is the minimal $X\subseteq S$ such that every $x\in X$ beats any $y\in S\setminus X$
in the sense that a majority of voters prefer $x$ to $y$; see for instance \citealt[\S\,{3.3} and 3.5]{brill:2016}).
% Brams &al suposen que els empats de la forma $V_{xy} = V_{yx}$ es resolen a l'atzar
% Segons Brill &al, això correspon al weak top cycle, o conjunt de Smith

% then the attention is restricted to the minimal majority-dominant subset

A simpler possibility is the \system{Approval Voting with a Runoff} \sigles{AVR}, considered by Remzi Sanver~(\citeyear{sanver:2010}). Here, the preferences are used only to compare between the two most approved options.
% When several pairs of options can be seen as the two most approved ones, we will consider all of them equally eligible, which may lead to more than one possible choice.

% They are monotonic (as Condorcet-Bucklin) \cite{sanver:2010}.

Anyway, in the case of~(\ref{eq:ex4e}) both these procedures keep choosing~$a$ for arbitrarily small $\epsilon> 0$, like the Swiss procedures, against the view expressed in Sections~\ref{ssec:should-a-small} and~\ref{ssec:Condorcet-last-views} that an overwhelming approval for an option should prevail over a slightly majoritarian preference for another.

\subsection{Dealing with ties and with incomplete information}
\label{ssec:ties-and-incomplete}

In practice, voters often do not have an opinion on some options. Besides, they may also rank equally some of them that they do know. In order to properly deal with such possibilities, one must begin by  distinguishing between them when interpreting the ballots.

In this connection, the Debian voting rules state that \shortquote{Unranked options are considered to be ranked equally with one another}~(\citealt[\S A.6.1]{debian}). However, this is really questionable.
% The voter might simply abstain from giving information (maybe because he does not know himself).
In Condorcet's words (\citealt[p.\,194]{condorcet:1788}; \citealt[p.\,139]{mclean-hewitt:1994}):
\begin{quote}
\leavevmode\llap{``}%
When someone votes for one particular candidate, he simply asserts that he considers that candidate better than the others, and makes no assertion whatsoever about the respective merits of these other candidates. His judgement is therefore incomplete.\rlap{''}
\end{quote}

A vote where two options $x$ and $y$ are really ranked equally with each other can be
% properly dealt with by assimilating it
assimilated to half a vote where $x$ is preferred to $y$ together with half another vote with the reverse preference.
In contrast, a vote that expresses neither a preference nor a tie between $x$ and $y$ should contribute neither to the number of voters who prefer $x$ to $y$ nor to the number of those who prefer $y$ to $x$.

Usually one takes the view that \shortquote{Ranked options are considered preferred to all unranked options}~(\citealt[\S A.6.1]{debian}). However, in some contexts it could be more appropriate to interpret that no comparison is made between a ranked option and an unranked one.
% Example: Dancesport finals by pairs

One should also be aware that not approving an option is not the same as really disapproving it.

On account of all these considerations, it is certainly desirable that the ballots be designed so as to make as clear as possible what the voter really means to say (no matter whether he is being sincere or not). Besides, it is most important to clearly specify how will the ballots be interpreted.

Once the information has been properly interpreted and collected, the problem remains of how should one deal with it. In fact, many existing methods assume that complete information is given,
and quite often it is not clear at all how should they be extended to the general incomplete case.

% [? Public Choice Society election 2006]
% Alternatively: \cite{cri}.

\subsection{The path heuristic}

Let us look once more at example \eqref{eq:ex1}. Should $a$ be approved or not?
We certainly have an argument to approve $a$: Simply, this view is supported by 60\% of the voters, certainly more than the 40\% who support the contrary view.  However, we can still produce an argument for disapproving $a$: In~fact, the statement that $a$ is less preferable than $b$ is supported by 75\% of the voters, and the statement that $b$ is disapproved, i.\,e.~less preferable than the default option, is supported by 65\% of the voters. So we have two statements with the following properties: their conjunction entails disapproving $a$, and both of them are supported more strongly than that 60\% in favour of approving~$a$.
% (In other words, we have / We postulate that this provides)
Somehow this provides an argument for disapproving $a$ with a ‘strength’ of $\min(75\%,65\%) = 65\% > 60\%$. Of course, we should allow for any analogous argument that might help towards approving $a$. However, in this particular example no such argument
% for (it / approving $a$)
improves upon the original support of 60\%. So we end up with $a$ being disapproved by 65\% to 60\%.

% [Similarly, b approved 60 (indirect), disapproved 65 (direct). We are led to stay with the default option

This idea can be elaborated into a general method for the so-called problem of ``judgment aggregation'', that is, for collectively deciding about several issues under the requirement of satisfying a given set of logical constraints (\citealt{dp}).
% No resulta fàcil:
% Here we will only remark that this method takes care of not using certain implications [that start from a false (an impossible) premise / that would increase support in an artificial way] For instance, ...
Having said that, here we will pay special attention to a particular subset of issues, namely
which options get approved or not, and to the strength of the corresponding decisions.
As we shall see, this leads to a procedure, the \system{Path-Revised Approval Choice} \sigles{PRAC},
that meets all the requirements that we have been pointing out in the preceding sections.

\section{The path-revised approval choice}
\label{sec:prac}

%\subsection{The preference scores as starting point}
\subsection{The original preference scores}

We assume a finite set of options~$\ist$.
% For the moment, we will not distinguish a default option.
The outcome of the procedure will be a subset of~$\ist$.
Ordered pairs of options will be denoted by juxtaposition;
so $xy$ means the ordered pair whose first element is $x$ and whose second element is $y$.

% We will deal only with ordered pairs of *different* options.

Our starting point are the \dfd{preference scores}, a.\,k.\,a.~paired-comparison scores,
that is the numbers of voters who expressed a preference for $x$ over~$y$,
where $x$ and $y$ vary over all ordered pairs of \emph{different} options.
These numbers will be denoted by $V_{xy}$.
% and we will deal with them only for $x\neq y$.

Consider, for instance, the case of example~(\ref{eq:ex1}). As we have been saying, a proposal $x$ being approved amounts to it being preferred to a certain default option $\dft.$ By taking this into account, the profile~(\ref{eq:ex1}) results in the preference scores that are shown below in~\eqref{eq:ex1-final-scores}. Since we are interested only in
% the preferences of $x$ over $y$ for
pairs $xy$ with $x\neq y$, we use the diagonal cells for specifying the simultaneous labelling of rows and columns by the existing options; so the cell located in row $x$ and column $y$ gives information about the preference of $x$ over $y$.
%Besides, we only display the margins in the upper triangle, since those in the lower one are easily derived by considering the corresponding opposite numbers.
\begin{equation}
\label{eq:ex1-final-scores}
(V_{xy}) = 
\begin{small}
\begin{tabular}{|c|c|c|}
\hlinestrut
\labelcell{a}&25&60\\
\hlinestrut
75&\labelcell{b}&35\\
\hlinestrut
40&65&\labelcell{\dft}\\
\hline
\end{tabular}
\end{small}
\,.
\hskip.75em
\end{equation}
%The numbers $V_{x0}$ and $V_{x0}-V_{0x}$ that appear in the last columns of these matrices are nothing else than the approval scores and the corresponding margins. According to their values, the approval choice is~$a$.

In the event of ties or incomplete information, the translation of the ballots into the preference scores should take into account the remarks that were made in Section~\ref{ssec:ties-and-incomplete}.
% In particular, ... ties

The final results will not depend on the absolute magnitude of the preference scores, but only on their relative magnitude.
%Therefore, it doesn't matter if these numbers are rescaled all of them by the same factor, such as the total number of voters.
Therefore, it doesn't matter if the numbers of voters are normalized to add up to $1$, as it was the case in \eqref{eq:ex4e}.

%Obviously, they are bound to satisfy the inequality $V_{xy}+V_{yx}\le V$, where $V$ stands for the total number of votes (or the corresponding rescaled value). The equality sign corresponds to the case where every voter expressed himself about $x$ in comparison to $y$, be it that he prefers $x$ to $y$, or viceversa, or that he ranks them equally.

The whole collection of preference scores will be called the \dfc{Llull matrix} of the vote (after Ramon Llull, who already made use of it in the 13th century, see \citealt[Ch.\,1, Section 4.2 and Ch.\,3]{mclean-urken:1995}).

\subsection{The path-revised preference scores}
\label{ssec:path-scores}

Our proposal is crucially based upon the \dfc{path-revised preference scores}, more shortly known as \dfc{path scores}. They can be seen as an upwards revision of the original preference scores in the light of the logical implications that are associated with the notion of transitivity.

More specifically, for every pair of options $x$ and $y$, one considers all paths from $x$ to $y$, i.\,e.~all finite sequences $\gamma = (x_0, x_1, ... x_m)$ with $x_0 = x$ and $x_m = y$. Each such path is postulated to provide an indirect support in favour of $x$ being preferred to~$y$, namely the minimum of the numbers $V_{x_ix_{i+1}}\ (0\le i<m)$. The maximum of all these indirect supports is then taken as the revised support in favour of $x$ being preferable to $y$.

%So, the path-revised preference scores are defined by
This revised support, or path score, is therefore defined by
\begin{equation}
\label{eq:vstar}
V^*_{xy} = \max\, \{\,V_\gamma \mid \gamma \text{ is a path } (x_0, x_1, ... x_m) \text{ from } x_0=x \text{ to } x_m=y\,\},
\end{equation}
where
\begin{equation}
\label{eq:vgamma}
V_\gamma = \min_{0\le i<m} V_{x_ix_{i+1}}.
\end{equation}
One easily checks that in \eqref{eq:vstar} one can restrict $m$ to be less than or equal to the number of options.
In fact, a longer path will necessarily contain some repeated option, and deleting the portion between repetitions
will always result in a higher or equal score.

\smallskip %\medskip
For three options the path scores are easily obtained by hand.
For instance, in the case of the Llull matrix \eqref{eq:ex1-final-scores} 
one easily finds that $V^*_{ab}=\max\,(V_{ab},\min(V_{a0},V_{0b})) = \max\,(25,\min(60,65)) = 60.$
By proceeding in the same way for all pairs of options, the table of path scores for that example is found to be as follows:
\begin{equation}
\label{eq:ex1-final-revised-scores}
(V^*_{xy}) = 
\begin{small}
\begin{tabular}{|c|c|c|}
\hlinestrut
\labelcell{a}&60&60\\
\hlinestrut
75&\labelcell{b}&60\\
\hlinestrut
65&65&\labelcell{\dft}\\
\hline
\end{tabular}
\end{small}
\,.
\hskip.75em
\end{equation}

For more than three options, the computation of the path scores becomes a combinatorial matter. However, there is a standard procedure for it, namely the Floyd-Warshall algorithm (\citealt[\S\,{25.2}]{cormen:2009}) whose computing time grows only as $N^3$, where $N$ stands for the number of options.

% [Key point: the minimum principle (Theophrastus)
% It can be objected that it lacks solid grounds. However, good properties]

%Notice that $V^*_{xy}$ is always larger than or equal to $V_{xy}$. This is so because $V_{xy}$ is the score of a particular path, namely the one-step path $\gamma = (x, y)$. In contrast to the original preference scores, the path-revised ones may easily exhibit the inequality $V^*_{xy} + V^*_{yx} > V$ (recall that $V$ stands for the total number of voters.
%% as in the preceding example

\smallskip %\medskip
Once the path scores have been computed, now it is a matter of comparing their values for opposite pairs of options, such as $xy$ and $yx$. \emph{If $V^*_{xy}$ is larger than $V^*_{yx}$ then we will adopt as a collective opinion the view that $x$ is preferable to $y$.} Furthermore, we will measure the \dfc{confidence} of this collective opinion by the difference $V^*_{xy}-V^*_{yx}$ (relative to the total number of voters).
%\marginpar{divide by $V$?}

For instance, table~\eqref{eq:ex1-final-revised-scores} leads to adopt the view that $b$ preferable to $a$ with a confidence of~$75-60=15,$
and that $\dft$ is preferable to both $a$ and $b$ with a confidence of~$65-60=5.$

\medskip
Recall that the very same rule applied to the original preference scores~\eqref{eq:ex1-final-scores} was leading to a cycle, namely $a$ collectively preferred to $\dft$, $\dft$ collectively preferred to $b,$ and $b$ collectively preferred to $a$.
In contrast, the collective preferences that we infer from \eqref{eq:ex1-final-revised-scores} are transitive.
Now, this is not a casual fact, but a general property of the path scores:

\begin{theorem*}[Schulze, 1998]
\label{st:transSchulze}
% The indirect comparison relation $\icr = \crl(\isc)$ is a partial order.
If $V^*_{xy} > V^*_{yx}$ and $V^*_{yz} > V^*_{zy}$ then $V^*_{xz} > V^*_{zx}$.
\end{theorem*}

\noindent
This fundamental fact was pointed out in 1998 by Markus Schulze,
who gave a proof of it in a mailing list about election methods. 
The proof can be found also in \citep{schulze:2003,schulze:2011,crc}.
% Ti 2006 no cal perquè repeteix Schulze

\medskip
In the following we shall write $x \succ y$ for $V^*_{xy}>V^*_{yx}$.
In the absence of any ties of the form $V^*_{xy}=V^*_{yx}$, 
the transitive relation~$\succ$
% that has been obtained
completely ranks all the options.
A natural choice is certainly the option that heads this complete ranking
(like option $\dft$ in our running example). 
In the following we will refer to it as the \system{Path-Top Choice}.

In the presence of ties, the matter becomes more involved,
which leads to a variant that we will refer to as \system{Strong Path-Top Choice}.
%The precise definitions are given in \S\,\ref{ssec:osi-ptc} of the online supporting information.

More precisely and generally, the path-top choices are those options $x$ such that $x\rkrel y$ for any $y\neq x,$
whereas the strong path-top choices are those options $x$ such that $x\succeq y$ for any $y\neq x.$
Here $x\succeq y$ stands for $V^*_{xy}\ge V^*_{yx}$
and $\rkrel$ means the transitive closure of the relation $\succeq$ (which need not be transitive).
A more detailed account is given in \S\,\ref{ssec:osi-ptc}. % of the online supporting information.

\medskip
However, our aim is not a choice of this sort,
but one in the spirit of approval voting.

%\subsection{The path-revised approval choice}
%\subsection{Aiming for the best option in comparison with the default}
\subsection{A choice in the spirit of approval voting}
\label{ssec:the-prac}

From now on we single out one of the options to play a special role.
This special option will be called \dfc{default} and it will be denoted by $\dft$.
%When a non-default option $x$ is preferred to $\dft$ we will also say that $x$ is approved.
%Similarly, when $\dft$ is preferred to $x$, then we will say that $x$ is disapproved.
%A non-default option $x$ being preferred to $\dft$ will also be expressed by saying that $x$ is approved.
%Similarly, $x$ being disapproved amounts to say that $\dft$ is preferred to $x$.
By definition, a non-default option $x$ being approved means that it is preferred to~$\dft$.
Similarly, $x$ being disapproved means that $\dft$ is preferred to~$x$.
% preferable : good / bad
% amounts to say

%In the spirit of approval voting, in the following we aim for the best option in comparison with the default
%(rather than the best option in comparison with all the others).
% x que maximitza $\min_{y\neq x} (V_{xy}-V_{yx})$ ¿=? path-top

Instead of looking for the best option in comparison with all the others,
now we aim for the best one in comparison with the default,
\ie the best one in terms of approval and disapproval.

Let $x$ be a non-default option. In principle, the approval and disapproval information about $x$ is given by the scores $V_{x\dft}$ and $V_{\dft x}$. 
However, in the case of approval-preferential voting
one can improve upon this information by turning to 
% we have a refined version of it, namely
% we have more refined information about it, namely
the path scores $V^*_{x\dft}$ and $V^*_{\dft x}.$
In fact, these numbers are the result of revising the approval and disapproval scores in the light of additional information that is provided by the preference scores between non-default options.

On the lines of the preceding section, an option $x$ will be granted approval as a collective opinion whenever the difference $V^*_{x\dft}-V^*_{\dft x}$ is positive.
% Such an option will be said to be approved according to the path-revision
Besides, the larger this margin, the higher the confidence of this opinion. Therefore, it makes sense to choose an option $x$ that maximizes this confidence, \ie that satisfies
\begin{equation}
\label{eq:def-rac}
V^*_{x\dft}-V^*_{\dft x} \,\ge\, V^*_{y\dft}-V^*_{\dft y},\qquad
\text{for any $y\in\ist\setminus \{x\}.$}
\end{equation}
In the following we will refer to such an option as a \dfc{path-revised approval choice,}
% Generally speaking, it need not be unique.
and the set formed by all of them will be denoted by $\winners.$

%Although $V^*_{xy}$ is defined only for $x\neq y$,
%in the expression $V^*_{x\dft}-V^*_{\dft x}$ we will allow for $x = \dft,$
%in which case we naturally define this expression to be equal to zero.
%Therefore, according to the preceding paragraph, the default option~$\dft$ will be included in $\winners$ if and only if
%$V^*_{y\dft} - V^*_{\dft y} \le 0$ for any $y\in\ist\setminus \{\dft\}.$
%In~this case, the options contained in $\winners$
%are not properly approved, but they
%belong to the boundary between approval and disapproval.
%% $\dft$ is the only path-revised approval choice if $<$.
%In the other case the path-revised approval choice will be said to be a \dfc{proper} one.
In the preceding paragraph we have assumed the inequality
$V^*_{x\dft}-V^*_{\dft x} > 0$ to hold for at least one non-default option $x$.
Such a situation will be referred to as the \dfc{proper} case.
% Otherwise,
In the improper case, that is when $V^*_{x\dft}-V^*_{\dft x} \le 0$ for all $x$,
we will take as path-revised approval choice
the default option together with any non-default option $x$ that might exist with $V^*_{x\dft}-V^*_{\dft x} = 0$.
%Such path-revised approval choices will be referred to as non-proper. 
In~this case, the options contained in $\winners$
are not properly approved, but they
belong to the boundary between approval and disapproval.

\vskip2\smallskipamount % \medskip %%%%%
For our running example~\eqref{eq:ex1}, the revised Llull matrix \eqref{eq:ex1-final-revised-scores}
results in $V^*_{a\dft}-V^*_{\dft a} = V^*_{b\dft}-V^*_{\dft b} = -5.$ So, the path-revised approval choice is the default option ---the status quo---
in contrast to the Swiss procedures, which choose proposal~$a$.

In this case the path-revised approval choice is the same as the path-top choice.
In other cases they are not the same.
For instance, in \S\,\ref{ssec:ex4e} we shall see that
for example~\eqref{eq:ex4e} with $0<\epsilon<\onehalf$ the path-revised choice is the unanimously approved option~$b$,
whereas the path-top choice is~$a$.

\medskip
Under the name of `goodness method', 
\citealt[\S\,7]{chr}, followed a slightly different approach where the notion of approval-disapproval was dealt with % modelled
by itself, without any reference to a default option. Besides, the approval and disapproval scores were revised only in the light of such implications as ``$x$ being good and $y$ being preferable to $x$ implies $y$ being good'', and similarly, ``$x$ being bad and $x$ being preferable to $y$ implies $y$ being bad'', leaving aside any other implications associated with the transitivity of preferences.
% between non-default options
Nevertheless, it turns out that the revised approval and disapproval scores in that setting are exactly the same as in the present one. This equivalence allows to transfer here the monotonicity result obtained in that article.
%by those authors

In the following section this result is complemented with several other properties of the path-revised approval choice.
%The corresponding mathematical proofs are given in the online supporting information.

\vskip-8mm\null %%%%%$

\subsection{Properties of the path-revised approval choice}
\label{ssec:properties}

This section briefly goes over the properties of the path-revised approval choice.
Except for the first item, that requires no proof,
and the property of monotonicity, whose proof is found elsewhere, 
the other results are new and their mathematical proofs are given here
---in Appendix~A--- for the first time.
%---in the online supporting information--- for the first time.

\paragraph{General domain.}
%Let us begin by insisting that the path-revised approval choice
%has no other aim than that of approval voting,
%namely, the best option in comparison with the default one...
To begin with, the path-revised approval\linebreak choice
% puts no constraints at all on the information contained in the ballots,
% as long as this information is only about approval and preferences.
can deal with any sort of incompleteness of information or ties in the ballots. 
In particular, it takes into account any preferences of the voter between 
his non-approved options.
See, for instance, the example in \S\,\ref{par:lastexample}.

\paragraph{Monotonicity
\textnormal{(proof in \citealt[Thm.\,7.1 and Cor.\,7.2]{chr};\, see \S\,\ref{ssec:osi-monotonicity})}.}
%\textnormal{(proof in \citealt[Thm.\,7.1 and Cor.\,7.2]{chr})}.}
\label{ssec:monotonicity}
If $x$ is a path-revised approval choice %\marginpar{Can be $x=\dft$?)
and some votes are modified by solely raising $x$ to a better position,
then $x$ remains a path-revised approval choice.

More generally, the same conclusion holds for any modification of the votes whereby
the preference scores of the form $V_{xz}$ increase or stay the same,
those of the form $V_{zx}$ decrease or stay the same,
and any other preference score stays the same.
% every $z\in A\setminus\{\,x\,\}$

\paragraph{Upper semicontinuity \textnormal{(proof in \S\,\ref{ssec:osi-usc})}.} %  of the online supporting information
The set of path-revised approval choices is an upper semicontinuous function of the profile
in the sense of Section~\ref{ssec:usc}.
% Small perturbations of a profile, \ie small variations in its vote frequencies, cannot bring about new choices.
In particular, in the case of two non-default options the path-revised approval choice cannot change from one option to the other 
without somewhere in between admitting both of them.

\paragraph{Condorcet's last desideratum \textnormal{(proof in \S\,\ref{ssec:osi-cld})}.} %  of the online supporting information
\label{ssec:Condorcet-last-desideratum}
As we saw in Section~\ref{ssec:Condorcet-last-views}, Condorcet's last views on elections
insisted on making sure that a good option is chosen \,rather than\, aiming for the best option but not being so sure about it.
The path-revised approval choice fulfils this goal in the following way,
where the term `confidence' should be understood in the sense of \S\,\ref{ssec:path-scores}:
%if $x$ is a path-revised approval choice and $y$ is not,
%then the confidence that $x$ is preferable to the default option
%is always higher than or equal to the confidence that $y$ is preferable to $x$.\marginpar{strict}
if $x$ is a path-revised approval choice and $y$ is not,
then the confidence that $x$ is preferable to the default option
is greater than or equal to the confidence that $y$ is preferable to $x$;
besides, this inequality is strict as soon as the path-revised approval choice is a proper one
(\ie it satisfies the strict inequality $V^*_{x0} - V^*_{0x} > 0$).
%if $x$ is a proper path-revised approval choice and $y$ is not,
%then the confidence that $x$ is preferable to the default option
%is greater than the confidence that $y$ is preferable to $x$.

%Moreover, the former is strictly greater than the latter
%whenever $x$ is a proper path-revised approval choice,
%\ie it satisfies the strict inequality $V^*_{x0} - V^*_{0x} > 0.$

%\begin{theorem}[proof in page~\pageref{st:prac-topranked}]
%%%%XMG:
%If $V^*_{x0} - V^*_{0x} \ge 0$ and $V^*_{x0} - V^*_{0x} > V^*_{y0} - V^*_{0y}$,
%then $V^*_{x0} - V^*_{0x} \ge V^*_{yx} - V^*_{xy}$.
%If the first inequality is strict, then the last one is also strict.
%%%%:XMG
%\end{theorem}
%
%\remark\marginpar{To online}
%Having $V^*_{x0} - V^*_{0x} > 0$ and $V^*_{x0} - V^*_{0x} \ge V^*_{y0} - V^*_{0y}$ does not imply
%$V^*_{x0} - V^*_{0x} \ge V^*_{yx} - V^*_{xy}$. A counterexample is provided by the profile \eqref{eq:ex-pareto} of \S\,\ref{par:ex-pareto}
%for $x=b$ and $y=a.$

\paragraph{Weak Pareto consistency \textnormal{(proof in \S\,\ref{ssec:osi-pareto})}.} %  of the online supporting information
\label{ssec:pareto}
Assume that the votes have the form of truncated rankings (not necessarily strict) and that they are interpreted as it is mentioned in Section~\ref{ssec:ties-and-incomplete}.
Assume also that two different options $x$ and $y$ satisfy the following condition: every voter either prefers $x$ to $y$ or ranks them equally, but at least one voter strictly prefers $x$ to $y$. In that case, $y$ cannot be a path-revised approval choice unless $x$ is also a path-revised approval choice.

Instead of this property, one %the reader
might have expected the stronger one that forbids any such $y$ to be a path-revised approval choice.
For future reference we will refer to this stronger version as \dfc{strong Pareto consistency.}

The reason why the  path-revised approval choice satisfies only the weaker version is because it aims only at making sure that we are choosing %a good option, \ie
an option preferable to the default one; the existing preferences between non-default options are considered only to the extent that they can contribute to that effect.
This is clearly illustrated by example~\eqref{eq:ex4e} with $\epsilon=\onehalf$, where all voters agree that both~$a$ and~$b$ are approved and that $a$ is preferable to~$b$ (i.\,e.\ all votes have the form  $a > b > 0$). In this case, the preferential information does not entail any revision of the view that both~$a$ and~$b$ are unanimously approved. So, both~$a$ and~$b$ remain path-revised approval choices in spite of the fact that $a$ definitely dominates~$b$ in the sense of Pareto.

%[Having seen this, when the set of path-revised choices contains several elements, one is tempted to refine the choice by means of a second criterion in the spirit of looking for the most preferred option within that set, possibly by means of the pathe scores. However, it is not clear that this can be done without losing monotonicity and upper semicontinuity]

\paragraph{Comparison with other methods.}
\label{ssec:comparison}
Table~\ref{table:comparison} below compares the Path-Revised Approval Choice with other methods that we have been mentioning in Section~1.
Comparison is made in terms of a few selected properties.
Proofs and counterexamples are given or referenced in Appendix~A. % the online supporting information

\newcommand\si{\raise2pt\hbox{\small$\surd$}}
\newcommand\no{$\times$}
\makeatletter
\renewcommand\strut{\vrule\@height14pt\@depth6pt\@width\z@}
\makeatother
\begin{table}[h]
\vskip5mm
\begin{center}
\begin{small}
\begin{tabular}{ l | c | c | c | c | c |}
\strut																			& Type & Mono & USC & \!\!Pareto\!\! & CLD \\ \hline
\strut \textsl{Approval Choice} \,(\S\,\ref{ssec:should-a-small})					& ad  & \si  & \si & w  & \no \\ \hline
\strut \textsl{Condorcet-Bucklin} \,(\S\,\ref{ssec:Condorcet-practical-methods})	& tr  & \si  & \no & s  & \no \\ \hline
\strut \textsl{Instant Runoff} \,(\S\,\ref{ssec:apv-instant-runoff}) 				& tr  & \no  & \si & s  & \no \\ \hline
\strut \textsl{Path-Top Choice}	\,(\S\,\ref{ssec:path-scores} and \ref{ssec:osi-ptc})
																				& g   & \no  & \si & w  & \no \\ \hline
\strut \textsl{Strong Path-Top Choice} (\S\,\ref{ssec:path-scores} and \ref{ssec:osi-ptc})
																				& g   & \si  & \si & w  & \no \\ \hline
\strut \textsl{Swiss Procedures} \,(\S\,\ref{ssec:Swiss-procedure})		& g   & \si  & \no & s  & \no \\ \hline
\strut \textsl{Preference-Approval Voting} \,(\S\,\ref{ssec:pba})					& g   & \si  & \no & w  & \no \\ \hline
\strut \textsl{Approval Voting with a Runoff} \,(\S\,\ref{ssec:pba})				& g   & \si  & \no & w  & \no \\ \hline
\strut \textsl{Path-Revised Approval Choice} \,(\S\,\ref{sec:prac})				& g   & \si  & \si & w  & \si \\ \hline
\end{tabular}
\end{small}
\captionsetup{width=85mm}
\caption{Comparison of different methods for making a choice by approval-preferential voting.}
\label{table:comparison}
\end{center}
\null\vskip-5mm %%%%%
\end{table}

The first column of that table divides the methods into different `types' depending on the kind of information that they assume and deal with:  
% Under the column `Type' we indicate the type of information that can be or is really dealt with:
`ad' means that only approval-disapproval information is used (any preferential information is thrown away);
`tr'~means that the ballot is constrained to have the form of a truncated strict ranking of non-default options;
`g'~stands for the general setting, meaning that the method can deal with any information that can be represented by means of a Llull matrix
after having explicitly included the default option
---in particular, these methods take into account the preferences of a voter between his non approved options.

The column `Mono' indicates which methods comply with monotonicity.
The statement of this property given in the first part of \S\,\ref{ssec:monotonicity} assumes that the ballot has the form of a truncated ranking.
However, for the methods of general type, monotonicity can be understood in the more general form stated in the second part of \S\,\ref{ssec:monotonicity}.
% in the online supporting information (ref), which considers variations in the entries of the Llull matrix.

The column `USC' indicates which methods comply with upper semicontinuity as explained in \S\,\ref{ssec:usc}.
In this connection we recall that we are dealing with the whole set of possible choices that can be made,
which may be more than one because of the ties that may arise in the procedure
(as in the Instant Runoff eliminations, or when selecting the two most approved options in the Approval Voting with a Runoff).

The next column is about the properties of Pareto consistency considered in \S\,\ref{ssec:pareto}:
`w' means the weak version and `s' means the strong one.
In this connection, the fact that both Condorcet-Bucklin and Instant Runoff satisfy the strong version
is partly due to the fact that they require the votes to be \emph{strict} truncated rankings.

Finally the `CLD' column emphasizes the fact that property \S\,\ref{ssec:Condorcet-last-desideratum} of the Path-Revised Approval is in the spirit of Condorcet's last desideratum (\S\,\ref{ssec:Condorcet-last-views}) and has no counterpart in the other methods.

\subsection{Additional examples}

\paragraph{}
\label{ssec:ex4e}
Let us consider the vote \eqref{eq:ex4e}, where $\epsilon$ is allowed to vary in the interval $-\onehalf \le \epsilon \le \onehalf$.
The Llull matrix and the associated margins are as follows:
\begin{equation*}
\label{eq:ex4e-final-scores}
(V_{xy})\!= 
\begin{small}
\begin{tabular}{|c|c|c|}
\hlinestrut
\labelcell{a}\strutoh&$\onehalf+\epsilon$&$\onehalf+\epsilon$\\
\hlinestrut
$\onehalf-\epsilon$&\labelcell{b}\strutoh&$1$\\
\hlinestrut
$\onehalf-\epsilon$&$0$&\labelcell{\dft}\strutoh\\
\hline
\end{tabular}
\end{small}
\,,\quad
(V_{xy}\!-\!V_{yx})\!= 
\begin{small}
\begin{tabular}{|c|c|c|}
\hlinestrut
\labelcell{a}\strutoh&$2\epsilon$&$2\epsilon$\\
\hlinestrut
$-2\epsilon$&\labelcell{b}\strutoh&$1$\\
\hlinestrut
$-2\epsilon$&$-1$&\labelcell{\dft}\strutoh\\
\hline
\end{tabular}
\end{small}
\,.
%\hskip.75em
\hbox to0pt{\quad(11)\hss}
\end{equation*}
\setcounter{equation}{11}%
One easily checks that the only path score that differs from the original score is $V^*_{\dft b}$, which is equal to $\min(\onehalf-\epsilon,\onehalf+\epsilon),$ that is $\onehalf-\epsilon$ for $\epsilon\ge0$ and $\onehalf+\epsilon$ for $\epsilon\le0$. So the path scores and associated margins are as follows:
\begin{equation*}
\label{eq:ex4e-final-revised-scores}
(V^*_{xy})\!= 
\begin{small}
\begin{tabular}{|c|c|c|}
\hlinestrut
\labelcell{a}\strutoh&$\onehalf+\epsilon$&$\onehalf+\epsilon$\\
\hlinestrut
$\onehalf-\epsilon$&\labelcell{b}\strutoh&$1$\\
\hlinestrut
$\onehalf-\epsilon$&$\onehalf-|\epsilon|$&\labelcell{\dft}\strutoh\\
\hline
\end{tabular}
\end{small}
\,,\quad
(V^*_{xy}\!-\!V^*_{yx})\!= 
\begin{small}
\begin{tabular}{|c|c|c|}
\hlinestrut
\labelcell{a}\strutoh&$2\epsilon$&$2\epsilon$\\
\hlinestrut
$-2\epsilon$&\labelcell{b}\strutoh&$\onehalf+|\epsilon|$\\
\hlinestrut
$-2\epsilon$&$-\onehalf\!-\!|\epsilon|$&\labelcell{\dft}\strutoh\\
\hline
\end{tabular}
\end{small}
\,.
%\hskip.75em
\hbox to0pt{\quad(12)\hss}
\end{equation*}
\setcounter{equation}{12}%

Therefore, the unanimously approved option $b$ is a path-revised approval choice for any $\epsilon\in[-\onehalf,+\onehalf]$ and $a$ is also included as a path-revised approval choice for $\epsilon = \onehalf$. These results are exactly the same as those of the Approval Choice, in accordance with the margins $V_{a\dft}-V_{\dft a}$ and $V_{b\dft}-V_{\dft b}$ that are shown in~(11). %%%%%
For $\epsilon\ge0$ these choices differ from the path-top ones, which reduce to $a$ for $\epsilon>0$ and admit both $a$ and $b$ for $\epsilon = 0.$
Remarkably enough, any other method that appears in Table~1 chooses option $a$ for $\epsilon>0$.

%\begin{center}
%\begin{small}
%\begin{tabular}{ l | c | c | c | c | c |}
%\strutoh & $-\onehalf\le\epsilon<0$ & $\epsilon = 0$ & $0<\epsilon<\onehalf$ & $\epsilon = \onehalf$ \\ \hline
%\strut \textsl{ESP}	& $b$ & $b$ & $a$ & $a$ \\ \hline
%\strut \textsl{PTC}		& $b$ & $\{a,b\}$ & $a$ & $a$ \\ \hline
%\strut \textsl{AC}		& $b$ & $b$ & $b$ & $\{a,b\}$ \\ \hline
%\strut \textsl{CB}		& $b$ & $b$ & $a$ & $a$ \\ \hline
%\strut \textsl{IR}		& $b$ & $\{b,0\}$? & $a$ & $a$ \\ \hline % $\{a,b\}$ si majoria no estricta 
%\strut \textsl{PAV}		& $b$ & $b$ & $a$ & $a$ \\ \hline
%\strut \textsl{AVR}		& $b$ & $\{b,a?\}$ & $a$ & $a$ \\ \hline
%\strut \textsl{PRAC}	& $b$ & $b$ & $b$ & $\{a,b\}$ \\ \hline
%\end{tabular}
%\end{small}
%\end{center}

\paragraph{}
\label{par:bern2010}
Let us look now at the Bern\,2004 referendum mentioned in the Introduction. 
The data given in \citealt[Table~1]{bochsler:2010} contain a misprint in one of the figures.
The official data, which are published in \citealt{bern:2004}, are as follows, where $a,b,0$ stand respectively for the amendment of the parliament, the people's amendment, and the status quo:
% També representats GR VV SQ
\begin{equation*}
\label{eq:bern-final-scores}
(V_{xy})\!= 
\begin{small}
\begin{tabular}{|c|c|c|}
\hlinestrut
\labelcell{a}&101\,586&109\,812\\
\hlinestrut
106\,863&\labelcell{b}&104\,144\\
\hlinestrut
102\,796&106\,832&\labelcell{\dft}\\
\hline
\end{tabular}
\end{small}
\,,\quad
(V_{xy}\!-\!V_{yx})\!= 
\begin{small}
\begin{tabular}{|c|c|c|}
\hlinestrut
\labelcell{a}&-5\,277&7\,016\\
\hlinestrut
5\,277&\labelcell{b}&-2\,688\\
\hlinestrut
-7\,016&2\,688&\labelcell{\dft}\\
\hline
\end{tabular}
\end{small}
\,.
%\hskip.75em
\hbox to0pt{\quad(13)\hss}
\end{equation*}
\setcounter{equation}{13}%
The total number of voters was $225\,758,$ which is larger than any of the numbers $V_{xy}+V_{yx}$.
This is due to the fact that some voters did not answer all the questions.
Anyway, the approval choice is again~$a$. Besides, it is the only proposal with more approval than disapproval,
which entails that it is also the choice of the Swiss procedures.

In this case, the path scores and their margins are as follows:
\begin{equation*}
\label{eq:bern-final-revised-scores}
(V^*_{xy})\!= 
\begin{small}
\begin{tabular}{|c|c|c|}
\hlinestrut
\labelcell{a}&106\,832&109\,812\\
\hlinestrut
106\,063&\labelcell{b}&106\,063\\
\hlinestrut
106\,063&106\,832&\labelcell{\dft}\\
\hline
\end{tabular}
\end{small}
\,,\quad
(V^*_{xy}\!-\!V^*_{yx})\!= 
\begin{small}
\begin{tabular}{|c|c|c|}
\hlinestrut
\labelcell{a}&769&2\,980\\
\hlinestrut
-769&\labelcell{b}&-769\\
\hlinestrut
-2\,980&769&\labelcell{\dft}\\
\hline
\end{tabular}
\end{small}
\,.
%\hskip.75em
\hbox to0pt{\quad(14)\hss}
\end{equation*}
\setcounter{equation}{14}%
% NEW:
According to the signs of these margins, $a$ is also the path-revised choice as well as the path-top one.

\paragraph{}
\label{par:lastexample}
Next we give an example where the path-revised approval choice is neither the approval choice nor the path-top one.
% In this example we assume that the votes have the form of truncated rankings.
This example will also illustrate how to deal with general truncated rankings.
Specifically, the votes are as follows:
\begin{equation}
\label{eq:ex-final}
\begin{split}
9:\, a \better b \better c \better d \better 0\,,\qquad
1:\, b \better a \better c \better d \better 0\,,\quad\\[-1pt]
1:\, d \better 0\,,\qquad\hskip-2pt
5:\, a \better d \better 0 \better b \better c\,,\qquad 
9:\, c\,.\quad 
\end{split}
\end{equation}

Like Debian elections (see \S\,\ref{ssec:ties-and-incomplete}), we will interpret that any unranked option is less preferred than any ranked one. However, we will not infer anything about the preference between two unranked options. In particular, the vote ``$d \better 0$'' says only that $d$ is approved and any other non-default option is disapproved, whereas the last vote, limited to mentioning $c$, says that $c$ is approved but it does not say anything about the approval or disapproval of the non-default options other than $c$.

By applying these rules, we get the absolute preference scores and margins that are shown next.
\begin{equation*}
\label{eq:ex-final-scores}
(V_{xy})\!= 
\begin{small}
\begin{tabular}{|c|c|c|c|c|}
\hlinestrut
\labelcell{a}&14&15&15&15\\
\hlinestrut
1&\labelcell{b}&15&10&10\\
\hlinestrut
9&9&\labelcell{c}&19&19\\
\hlinestrut
1&6&6&\labelcell{d}&16\\
\hlinestrut
1&6&6&0&\labelcell{\dft}\\
\hline
\end{tabular}
\end{small}
\,,\quad
(V_{xy}\!-\!V_{yx})\!= 
\begin{small}
\begin{tabular}{|c|c|c|c|c|}
\hlinestrut
\labelcell{a}&13&6&14&14\\
\hlinestrut
-13&\labelcell{b}&6&4&4\\
\hlinestrut
-6&-6&\labelcell{c}&13&13\\
\hlinestrut
-14&-4&-13&\labelcell{d}&16\\
\hlinestrut
-14&-4&-13&-16&\labelcell{\dft}\\
\hline
\end{tabular}
\end{small}
\,.
%\hskip.75em
\hbox to0pt{\quad(16)\hss}
\end{equation*}
\setcounter{equation}{16}%
% Since the votes contained disapproval information

By looking at the preference scores, we see that $V_{ax} > V/2 = 25/2$ for every $x\neq a$.
So $a$ is a Condorcet winner.
However, by inspecting the margins we see that the approval choice is $d,$
since this option realises the maximum value
of the margin over~$\dft,$ namely~16.

In order to work out the path-revised approval choice and the path-top one,
we compute the path scores
% (using the Floyd-Warshall algorithm mentioned in \S\,\ref{ssec:path-scores})
and their corresponding margins. The resulting values are as follows:
\begin{equation*}
\label{eq:ex-final-revised-scores}
(V^*_{xy})\!= 
\begin{small}
\begin{tabular}{|c|c|c|c|c|}
\hlinestrut
\labelcell{a}&14&15&15&15\\
\hlinestrut
9&\labelcell{b}&15&15&15\\
\hlinestrut
9&9&\labelcell{c}&19&19\\
\hlinestrut
6&6&6&\labelcell{d}&16\\
\hlinestrut
6&6&6&6&\labelcell{\dft}\\
\hline
\end{tabular}
\end{small}
\,,\quad
(V^*_{xy}\!-\!V^*_{yx})\!= 
\begin{small}
\begin{tabular}{|c|c|c|c|c|}
\hlinestrut
\labelcell{a}&5&6&9&9\\
\hlinestrut
-5&\labelcell{b}&6&9&9\\
\hlinestrut
-6&-6&\labelcell{c}&13&13\\
\hlinestrut
-9&-9&-13&\labelcell{d}&10\\
\hlinestrut
-9&-9&-13&-10&\labelcell{\dft}\\
\hline
\end{tabular}
\end{small}
\,.
%\hskip.75em
\hbox to0pt{\quad(17)\hss}
\end{equation*}
\setcounter{equation}{17}%

Since there are no ties of the form $V^*_{xy}=V^*_{yx}$, the transitive relation $\succ$ ($x\succ y$ when $V^*_{xy} > V^*_{yx}$) is a complete ranking, namely $a\succ b\succ c\succ d\succ \dft.$
So the path-top choice is~$a$.
% (by Corollary~\ref{st:topranked-strict}),
% which we already knew since the method of path scores is guaranteed to choose the Condorcet winner whenever it exists.\marginpa{ref}
In~contrast, the path-revised approval choice is~$c$, which realises the maximum value for the revised margin over $\dft,$ namely~$13.$

So, from the point of view of the path scores,
option $a$ looks better than any other,
but the best one in comparison with the default is option~$c$.

Notice also that the margin $V^*_{c\dft}-V^*_{\dft c}$
is not only larger than $V^*_{x\dft}-V^*_{\dft x}$ for any other option $x$,
but it is also larger than $V^*_{xc}-V^*_{cx}$. 
In other words, our confidence that $c$ is a good option ---\ie preferable to the default option--- 
is stronger than our confidence about any other option $x$ being preferable to $c$.
As we have pointed out in \S\,\ref{ssec:Condorcet-last-desideratum},
this is a general property of the path-revised approval choice
in the spirit of Condorcet's last desideratum.

\section{Concluding remarks}

As we have seen in Section~1, some organizations do take their decisions by means of approval-preferential voting. That is, each voter can express himself both in terms of approval or disapproval of separate options and in terms of preferences between them.

%This information might be incomplete in that it does not include all options and all pairs of them.
%In practice, one cannot expect the voter to give always complete information, about all options and all pairs of them.
%So a practical method should be able to deal with such incompleteness of information
%However, one cannot expect voters to give complete information unless the number of options is very small.

These two kinds of information
% ---approval and preferences---
complement each other quite suitably
from the voters' point of view.
However, it is not clear which procedure should be used to determine the decision from a given set of votes.
Some of the existing procedures give priority to the approval information,
% the Swiss procedures
while others pay more attention to the preferential information.
% Debian
The two approaches may easily lead to different choices.
So the question of how to combine both kinds of information is not a trivial one. % is quite relevant.

In this connection, it makes sense to take the view that approving an option amounts to preferring it to the \textit{status quo}, and that disaproving it amounts to the contrary preference. Therefore, one can look at approval-preferential voting as simple preferential voting about an augmented set of options. Having said that, one must be ready to drop % dismiss
the usual requirement of neutrality, since the \textit{status quo} option plays a special role, which justifies a different treatment.

More specifically, in this context there are at least two different concepts of choice:
\begin{enumerate}[nosep,leftmargin=2cm]
\item[(A)] The best option in comparison with the \textit{status quo}.
\item[(B)] The best option in comparison with all the others
% [in terms of the expressed / existing preferences between all options,]
(including the \textit{status quo}).
\end{enumerate}

%[Comparison should be understood in a general sense]
%[In part (b): all paired-comparisons]

Like the ordinary approval choice, % (of simple approval voting),
the path-revised approval choice is definitely aimed only at~(A).
% Exemple? 100 a > b |
% Having said that, the path-revised approval choice does not throw away the existing information about ...
However, the existing information about the preferences between non-default options is not thrown away.
Instead, it is used for a preliminary revision of the whole set of approval-preferential scores.
%(in the light / spirit of transitivity).
%(so as to achieve a certain consistency with transitivity)
This revision is done by means of the so-called path scores, 
whose computation follows the transitivity implications.

As we have seen, the path-revised approval choice
% enjoys several good properties. In particular, it fulfils...
allows to deal with any sort of incompleteness of information or ties in the ballots,
which easily arises in practice.
% as a result of the voters not being acquainted with all of the options.
On the other hand, it enjoys the properties of monotonicity, upper semicontinuity and weak Pareto consistency.

Finally, the path-revised approval choice fulfils in a well-defined sense Condorcet's last desideratum
of choosing a surely good option rather than a doubtfully best one.

\appendix
%\section{Online supporting information}
\section{Mathematical proofs and other technicalities}

\noindent
% This appendix contains mathematical proofs and other technicalities. Its first section
The first section of this appendix is devoted to a precise definition of the methods that we call
path-top choice and strong path-top choice.
After that, we will go over the different properties that are mentioned in Section~\ref{ssec:properties}.
For each property, we will substantiate its fulfilment by the path-revised approval choice,
and then we will check for it on the other methods that appear in Table~1 of Section~\ref{ssec:comparison}.

\subsection{The path-top choices}
\label{ssec:osi-ptc}

The path scores can be used to make a choice not only in the spirit of approval voting,
but also in the spirit of looking for the best option in comparison with all the others,
without any special attention to a possible default option. In fact, the latter was the standpoint
of the original proposal made in 1998 by Markus Schulze (\citeyear{schulze:2003,schulze:2011})
on the basis of path scores.

\smallskip
So the next developments still rely on the framework of Section~\ref{ssec:path-scores}.
More specifically, our starting point is the preference relation $\succ$ defined by
\begin{equation}
\label{eq:relsucc}
x\succ y  \quad\text{if and only if}\quad V^*_{xy} > V^*_{yx}.
\end{equation}

\smallskip
In accordance with the theorem of Section~\ref{ssec:path-scores}, 
this asymmetric relation is ensured to be transitive.
% If $V^*_{xy} \ne V^*_{yx}$ for every $x$ and $y$, then it is also complete;
In the absence of ties of the form $V^*_{xy}=V^*_{yx}$, this relation is also complete;
so in this case $\succ$ totally orders all the options
and therefore defines a unique topmost option.
% that we call \dfd{path-top choice}.

\smallskip
Generally speaking, however, the transitive relation $\succ$ need not be complete,
which makes things a little more involved.
To begin with, one is led to consider
the relation $\succeq$ defined by the non-strict inequality $V^*_{xy} \ge V^*_{yx}.$
This relation is certainly complete. However, it need not be transitive.
This leads to consider its transitive closure~$\rkrel,$
i.\,e.~$x\rkrel y$ if and only if
there exists a path $x_0x_1\dots x_m$ from $x_0=x$ to $x_m=y$ such that
$V^*_{x_ix_{i+1}} \ge V^*_{x_{i+1}x_i}$ for all $i<m.$
In contrast to $\succ$ and $\succeq$, the relation $\rkrel$ is ensured to be both transitive and complete.
% to rank all the options (at the expense of ties).
%in order to properly rank all the options.
%In the following we will refer to this transitive closure
%as the \dfc{ranking relation,} and we will denote it by the symbol~$\rkrel$.

%Besides, we define $x \strkrel y$ as the negation of $y\succeq x$.
%Notice that having $V^*_{xy} > V^*_{yx}$ is not sufficient for having $x\strkrel y$.
%A necessary and sufficient condition: % Lema?
%%\marginpar{($*$)}
%every path $x_0x_1\dots x_m$
%from $x_0=x$ to $x_m=y$ contains some $i$ such that $V^*_{x_ix_{i+1}} > V^*_{x_{i+1}x_i}.$
%In particular, $x\strkrel y$ implies $V^*_{xy} > V^*_{yx}$.

\bigskip
Since $\rkrel$ is a complete ranking,
a natural choice is provided by its topmost options,
that is, the members of the set
\begin{equation}
T = \{\,x\in A \mid x \rkrel y,\ \forall y \neq x\,\}.
\end{equation}
These options are what we call the \dfd{path-top choices}.

\smallskip
In the literature on tournament solutions such a set appears under the names of weak top cycle,
GETCHA set (for GEneralized Top-CHoice Assumption), or Smith set 
(see \citealt{schwartz:1986} and \citealt[\S\,3.5]{brill:2016}).
% see for instance
\emph{Notice, however, that we are considering it for the tournament which arises from the path scores~$V^*$,
and not for that which arises directly from the original scores~$V$.}
%so the obtained set will generally differ from the analogous one that
%derives directly from the original scores~$V$.

\smallskip
The next statement characterizes the set of path-top choices
directly in terms of the relation $\succ$ instead of $\rkrel$.
In this connection, we use the following terminology:
A set $X$ of options is a \dfc{dominant set for~$\succ$}
if and only if it is not empty and it satisfies
%\label{eq:topranked-smith}
$x\succ y$ (i.\,e.\ $V^*_{xy} > V^*_{yx}$) for all $x\in X$ and $y\not\in X$.
Besides, $X$ is a \dfc{minimal dominant set for~$\succ$}
if and only if it is a dominant set for $\succ$
%and no proper subset of $X$ is a dominant set.
and no proper subset of $X$ has this property.

\newcommand\bla{\citealt[Cor.\,6.2.1 and 6.2.2]{schwartz:1986}}
\begin{theorem}[\bla]
\label{st:topranked-smith}
The set of path-top choices is the only minimal dominant set for~$\succ$.
\end{theorem}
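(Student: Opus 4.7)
The plan is to prove the theorem by first showing that dominant sets for $v^*$ are totally ordered by inclusion (so that the minimum dominant set exists and is unique; call it $M$), and then establishing $T = M$ by two containments. For the chain property: if $X, Y$ are both dominant with $x \in X \setminus Y$ and $y \in Y \setminus X$, dominance of $X$ yields $v^*_{xy} > v^*_{yx}$, while dominance of $Y$ yields $v^*_{yx} > v^*_{xy}$, which is impossible. Since $\ist$ is itself (vacuously) a dominant set and is finite, a unique minimum dominant set $M$ exists.

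For $T \subseteq M$, I would suppose $x \in T$ but $x \notin M$ and pick any $y \in M$. Since $x \in T$, there is a chain $x = z_0, z_1, \ldots, z_k = y$ with $v^*_{z_i z_{i+1}} \ge v^*_{z_{i+1} z_i}$ at each step. The chain must cross from outside $M$ into $M$ at some index $i$; then $z_i \in M$ and $z_{i-1} \notin M$ force $v^*_{z_i z_{i-1}} > v^*_{z_{i-1} z_i}$ by dominance of $M$, contradicting the chain inequality at that very step.

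For $M \subseteq T$, the crux is to prove that the restriction to $M$ of the relation $R = \{xy : v^*_{xy} \ge v^*_{yx}\}$ is strongly connected. I would consider the condensation of $R|_M$ into its strongly connected components; this is a finite directed acyclic graph, so it has a source component $S$ (one with no incoming arc from another component). I claim that $S$ itself is a dominant subset of $\ist$: for any $s \in S$ and $y \notin S$, either $y \notin M$ (whence dominance of $M$ yields $v^*_{sy} > v^*_{ys}$), or $y \in M \setminus S$ (whence the source property of $S$ forbids any $R$-arc from $y$ to $s$, giving $v^*_{ys} < v^*_{sy}$). By minimality of $M$, we must then have $S = M$, so $R|_M$ is strongly connected; this gives $x \succeq y$ for all $x, y \in M$, and combined with $x \succeq y$ for $y \notin M$ (from dominance of $M$), it yields $M \subseteq T$.

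The main obstacle is this last strong-connectivity argument. Ties $v^*_{xy} = v^*_{yx}$ inside $M$ can prevent any particular proper subset of $M$ from being dominant, so one cannot simply pick a single ``best'' element of $M$ and hope to contradict minimality; passing to the strongly connected components of $R|_M$ is the right level of aggregation at which the minimality of $M$ can be used to force the whole component structure to collapse to a single class.
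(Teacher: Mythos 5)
Your proof is correct, but the decomposition differs from the paper's precisely in the direction that carries the real content. Your crossing argument for $T \subseteq M$ is essentially the paper's second step; the paper runs the same argument for an \emph{arbitrary} dominant set $X$, concluding that every dominant set contains $T$. The genuine divergence is in the other direction: the paper never forms the minimum dominant set $M$ and never looks at strongly connected components, because it proves directly that $T$ itself is dominant, arguing by contraposition --- if $x \in T$ and $v^*_{yx} \ge v^*_{xy}$ for some $y$, then $y \succeq x$, and since $\succeq$ is transitive (being a transitive closure by definition) and $x \succeq z$ for all $z \ne x$, it follows that $y \succeq z$ for all $z \ne y$, i.e.\ $y \in T$. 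With ``\hskip-\mathsurround$T$ is dominant'' and ``every dominant set contains $T$'' in hand, existence, minimality and uniqueness follow at once, so neither the chain property nor the condensation is needed. What your longer route buys: the observation that dominant sets are totally ordered by inclusion; the structural fact that the minimal dominant set is strongly connected under the non-strict relation $v^*_{xy} \ge v^*_{yx}$ (in the paper this is only implicit, recoverable from dominance of $T$); and the nonemptiness of $T$, which in your proof falls out of $M \subseteq T$ with $M$ nonempty, whereas the paper tacitly relies on the fact that the total preorder $\succeq$ on a finite set has maximal elements. What the paper's route buys is brevity: it exploits the transitivity that is built into the definition of $T$, which is exactly the tool your proof of $M \subseteq T$ postpones until its very last step.
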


%\begin{thmn}[\ref{st:topranked-smith}]
%\label{pst:topranked-smith}
%The set $T$ of path-top choices is the only minimal dominant set for~$V^*$.
%\end{thmn}
%\begin{proof}
%%Dominant.
%Let us begin by checking that property (\ref{eq:topranked-smith}) holds for $X = T$.
%To~this effect, it suffices to show that having $x\in T$
%and the opposite inequality $V^*_{yx} \ge V^*_{xy}$ for some $y$ implies $y\in T$.
%In fact, this opposite inequality says that $y\succeq x,$
%whereas $x\in T$ means that $x\rkrel z$ for all $z\neq x.$
%By transitivity, it follows that $y\rkrel z$ for all $z\neq y,$
%which means that $y\in T.$
%
%%Only minimal.
%In order to complete the proof, it suffices to show that
%any dominant set $X$ contains the whole of $T$.
%%Let us see now that a non-empty set $X$ strictly contained in $T$ cannot
%%satisfy~(\ref{eq:topranked-smith}).
%In other words, the existence of $x\in X$ and $y\in T\setminus X$ implies 
%some violation of (\ref{eq:topranked-smith}).
%In fact, the assumption that $y\in T$ entails $y\rkrel x$.
%This means that there exists a path $y_0y_1\dots y_m$ from $y_0=y$ to $y_m=x$ such that
%$V^*_{y_iy_{i+1}} \ge V^*_{y_{i+1}y_i}$ for all $i<m$.
%Since $y\not\in X$ and $x\in X,$ there exists some $i$ such that $y_i\not\in X$ and $y_{i+1}\in X$.
%For this $i$ the preceding inequality is a violation of~(\ref{eq:topranked-smith}).
%\end{proof}

\begin{corollary}
\label{st:topranked-strict}
An option $x$ is the only path-top choice if and only if it satisfies $V^*_{xy} > V^*_{yx}$ for all $y\neq x.$
\end{corollary}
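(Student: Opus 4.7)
The plan is to derive the corollary directly from Theorem~\ref{st:topranked-smith}, which identifies $T$ as the unique minimal dominant set for $v^*$. The two directions correspond to applying the theorem in opposite ways, and both are essentially immediate once one observes that the hypothesis $v^*_{xy} > v^*_{yx}$ for all $y\neq x$ is precisely the statement that the singleton $\{x\}$ is a dominant set.

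First I would treat the ``only if'' direction. Suppose $T = \{x\}$. By Theorem~\ref{st:topranked-smith}, $T$ is a dominant set, so defining $X = T = \{x\}$ in~(\ref{eq:topranked-smith}) yields $v^*_{xy} > v^*_{yx}$ for every $y\notin\{x\}$, which is exactly the desired inequality for all $y\neq x$.

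For the ``if'' direction, assume $v^*_{xy} > v^*_{yx}$ for all $y\neq x$. Then $\{x\}$ satisfies~(\ref{eq:topranked-smith}) and is nonempty, so it is a dominant set. Being a singleton, it has no nonempty proper subset, hence it is trivially a minimal dominant set. By the uniqueness part of Theorem~\ref{st:topranked-smith}, $\{x\} = T$, so $x$ is the only path-top choice.

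There is no real obstacle here; the corollary is a straightforward specialization of Theorem~\ref{st:topranked-smith} to the case where the minimal dominant set happens to be a singleton. The only thing to be careful about is noting explicitly that any singleton dominant set is automatically minimal, so that the uniqueness clause of the theorem applies.
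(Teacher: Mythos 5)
Your proof is correct and is essentially the paper's own argument: the paper states Corollary~\ref{st:topranked-strict} without proof precisely because it is the immediate specialization of Theorem~\ref{st:topranked-smith} to singleton sets, which is exactly what you carry out. Your explicit remark that a singleton dominant set is automatically minimal (so the uniqueness clause applies) is the one small point worth making, and you make it.
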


%If a Condorcet winner $x$ exists for the Llull matrix~$V$, then $x$ is the only path-top choice.
%For~this and other results we refer to \citealt{schulze:2003,schulze:2011,tideman,crc,cri}.
%In particular, the latter works show that the ranking $\rkrel$ can be used as the basis of
%a finer rating of the options.
%% [$R_x < R_y$ iff $x\succ y$, ref]

\bigskip
Instead of the weak top cycle of $\succ$, one can consider its strong top cycle,
a.\,k.\,a.~GOCHA set (for Generalized Optimal CHoice Axiom), or Schwartz set 
(see \citealt{schwartz:1986} and \citealt[\S\,3.5]{brill:2016}).
In the present context where $\succ$ is transitive, this set amounts to
\begin{equation}
\label{eq:gocha}
S = \{\,x\in A \mid x \succeq y,\ \forall y \neq x\,\}.
\end{equation}

\smallskip
Obviously, $S \subseteq T,$ with equality whenever $\succeq$ is transitive.
When $\succeq$ is not transitive, then it does not qualify as a complete ranking,
so the members of $S$ are doubtfully acceptable as the only topmost options.
In exchange, however, the set $S$ performs better than $T$ in some respects
(such as monotonicity, as we will see in \S\,\ref{ssec:osi-monotonicity}),
which justifies considering $S$ as an alternative choice set.
In particular, this is the choice set adopted by Markus Schulze (\citeyear{schulze:2003,schulze:2011}).
Here we will refer to it as the set of \dfd{strong path-top choices}.
% \dfd{S-path-top choices} S by strong, S by Schulze

%\begin{proposition}
%\label{st:topranked-strict}
%The following statements are equivalent to each other:\linebreak
%{\normalfont(a)}~$x$ is the only path-top choice, \ie $T=\{\,x\,\}$;\,
%{\normalfont(b)}~$x$ is the only strong path-top choice, \ie $S=\{\,x\,\}$;\,
%{\normalfont(c)}~$V^*_{xy} > V^*_{yx}$ for all $y\neq x.$
%\end{proposition}
%\begin{proof}
%The equivalence between (a) and (c) is an immediate consequence of Theorem~\ref{st:topranked-smith}.
%(a) implies (b): for any $y\neq x$ there exists $z\succ y$, if $z\neq x$ there exists $u\succ z$,
%by transitivity $u\succ y$, if $u\neq x$ ... eventually reach $x$, which shows that $x \succ y$ for any $y\neq x$ QED.
%\end{proof}

\subsection{Monotonicity}
\label{ssec:osi-monotonicity}

For ballots that have the form of truncated rankings (not necessarily strict), 
a choice method being monotonic means the following:
If $x$ is a choice by that method for a given profile, then it remains a choice when some votes of that profile
are modified by raising $x$ to a better position without any other change. More specifically,
``raising $x$ to a better position'' means one or more of the following changes or any sequence of them:
(a)~changing $z > x$ to $x > z$ for some~$z$; 
(b)~changing $z > x$ to $x \sim z$ for some~$z$; 
(c)~changing $x \sim z$ to $x > z$ for some~$z$; 
(d)~changing from $x$ not being mentioned in a truncated ranking to it being appended at the end of it.

For the general setting, where the input can be any Llull matrix,
the preceding concept admits the the following natural generalization:
$x$ remains a choice for any modification of the votes whereby
the preference scores of the form $V_{xz}$ increase or stay the same,
those of the form $V_{zx}$ decrease or stay the same,
and any other preference score stays the same.

\bigskip
\renewcommand\bla{\citealt[Thm.\,7.1 and Cor.\,7.2]{chr}}
\begin{theorem}[\bla]
\label{st:prac-monotonicity}
The path-revised approval choice is monotonic.
\end{theorem}

%% Enunciat més detallat
%Let us consider a particular option $x$.\marginpar{Can be $x=\dft$?}
%Assume that the votes are modified so that
%the preference scores of the form $V_{xz}$ increase or stay the same,
%those of the form $V_{zx}$ decrease or stay the same,
%and any other preference score stays the same.
%In this case, $V^*_{x0}-V^*_{0x}$ increases or stays the same.
%Furthermore, any inequality of the form $V^*_{x0}-V^*_{0x} \ge V^*_{y0}-V^*_{0y}$ is preserved;
%and if this inequality is strict, then it remains so.

%\vskip-7pt
\remark
% Instead of path-revised approval choice, \citealt[\S\,7]{chr} speak of `goodness method' and `goodness winners'.
% As  we have already mentioned in Secton~\ref{ssec:the-prac},
The standpoint of \citealt[\S\,7]{chr}, is a priori slightly different from the one adopted here.
However, both approaches lead to exactly the same revised approval and disapproval scores
---our $V^*_{x\dft}$ and $V^*_{\dft x}$, there denoted by $V^*(g_x)$ and $V^*(\overline g_x)$ and given by formulas (106) and (107)---.
This allows to transfer the above monotonicity result to the present setting.

\paragraph{\textsl{Approval Choice} (\S\,\ref{ssec:should-a-small}) is monotonic.} \conpar
An approval choice is an option $x$ that maximizes the value of $V_{x\dft}-V_{\dft x}$.
Obviously, it will remain chosen if $V_{x\dft}$ increases and/or $V_{\dft x}$ decreases. % \qed

\paragraph{\textsl{Condorcet-Bucklin} (\S\,\ref{ssec:Condorcet-practical-methods}) is monotonic.} \conpar
See \citealt[p.\,204]{tideman}.

\paragraph{\textsl{Instant Runoff} (\S\,\ref{ssec:apv-instant-runoff}) is not monotonic.} \conpar
See \citealt[p.\,194]{tideman}. % ex13.10

\paragraph{\textsl{Path-Top Choice} (\S\,\ref{ssec:osi-ptc}) is not monotonic.} \conpar
The following counterexample was already given in \citealt[\S\,12]{crc}:
\begin{equation}
\label{eq:ex2008} % NoStrictMonotonicity
\begin{split}
1:\, a \better d \better b \better e \better c,\quad
1:\, b \better a \better c \better e \better d,\quad
1:\, b \better c \better a \better d \better e,\quad\\[-2pt]
1:\, b \better c \better d \better e \better a,\quad
1:\, b \better e \better c \better a \better d,\quad
1:\, d \better a \better b \better c \better e,\quad\\[-2pt]
2:\, e \better a \better c \better d \better b,\quad
1:\, e \better c \better a \better d \better b,\quad
1:\, b \better d \better c \better a \better e,\quad
\end{split}
\end{equation}
The resulting preference scores and path scores are as follows:
\begin{equation*}
\label{eq:ex2008-final-scores}
(V_{xy})\!= 
\begin{small}
\begin{tabular}{|c|c|c|c|c|}
\hlinestrut
\labelcell{a}&5&5&7&5\\
\hlinestrut
5&\labelcell{b}&7&5&7\\
\hlinestrut
5&3&\labelcell{c}&7&5\\
\hlinestrut
3&5&3&\labelcell{d}&5\\
\hlinestrut
5&3&5&5&\labelcell{e}\\
\hline
\end{tabular}
\end{small}
\,,\quad
(V^*_{xy})\!= 
\begin{small}
\begin{tabular}{|c|c|c|c|c|}
\hlinestrut
\labelcell{a}&5&5&7&5\\
\hlinestrut
5&\labelcell{b}&7&7&7\\
\hlinestrut
5&5&\labelcell{c}&7&5\\
\hlinestrut
5&5&5&\labelcell{d}&5\\
\hlinestrut
5&5&5&5&\labelcell{e}\\
\hline
\end{tabular}
\end{small}
\,.
%\hskip.75em
\hbox to0pt{\quad(22)\hss}
\end{equation*}
\setcounter{equation}{22}%
As one can see, there are quite a few pairs $xy$ such that $V^*_{xy}=V^*_{yx}=5$.
As a consequence, it turns out that the relation $\rkrel$
that we defined in \S\,\ref{ssec:osi-ptc} is a whole tie.
The set of path-top choices is therefore the whole of~$\ist.$
Assume now that the last ballot in \eqref{eq:ex2008} is modified
by replacing $b \better d$ by $d \better b.$
The monotonicity property requires $d$ to continue being a path-top choice.
However, it is not so. In fact, one gets $a$ as the only path-top choice.

\paragraph{\textsl{Strong Path-Top Choice} (\S\,\ref{ssec:osi-ptc}) is monotonic.} \conpar
See \citealt[\S\,{4.5}]{schulze:2011}.
The result is easily obtained from \eqref{eq:gocha} by checking that the situation $x \succeq y$
---\ie the inequality $V^*_{xy} > V^*_{yx}$--- is preserved
when the Llull entries of the form $V_{xa}$ increase or stay the same
and those of the form $V_{ax}$ decrease or stay the same.
% $V^*_{xy} \ge V^*_{yx}$ for any $y \neq x$

\paragraph{The \textsl{Swiss Procedures} (\S\,\ref{ssec:Swiss-procedure}) are monotonic.} \conpar
Assume that $x$ has been chosen by one of the Swiss procedures (Bern or Nidwalden).
To start with, this requires it to be collectively aproved, \ie to satisfy $V_{x\dft} > V_{\dft x}$.
This condition is certainly preserved under the considered modifications of the Llull matrix.
The condition of maximizing the Copeland count is also preserved (\citealt[p.\,206]{tideman}).
Finally, Bern's condition of maximizing the sum $\sum_{y\neq x}V_{xy}$ will also be preserved,
as well as Nidwalden's condition of maximizing $V_{x\dft}-V_{\dft x}$.

\paragraph{\textsl{Preference-Approval Voting} (\S\,\ref{ssec:pba}) is monotonic.} \conpar
See \citealt[\S\,{5}]{brams-sanver:2009} or \citealt[\S\,{3.5}]{brams}. % same wording
The case with ties of the form $V_{xy} = V_{yx}$ can be dealt with by means of the results of \citealt{brill:2018}.
% \marginpar{assuming $V_{xy} \neq V_{yx}$}
%They deal only with the case of no ties of the form $V_{xy} = V_{yx}$,
%but the general case can be covered by means of \citealt{brill:2018}.
% in the absence of ties TC is monotonic (brill)
% No quadra amb l'experiència amb PTC i SPTC (Schulze)

\paragraph{\textsl{Approval Voting with a Runoff} (\S\,\ref{ssec:pba}) is monotonic.} \conpar
See \citealt[Thm.\,20.4.1]{sanver:2010}. %\marginpar{proof left to the reader! ok}

\subsection{Upper semicontinuity}
\label{ssec:osi-usc}

As it was explained in Section~\ref{ssec:usc}, this property is about small variations of the profile.
Recall that the profile of a vote means a specification of the frequency %(absolute or relative)
of every possible way to fill in the ballot.
So we can represent it as an element of~$\mathbb R_+^K$, where $K$ is the number of possible ways to fill in the ballot.

Let $X(U)\subseteq\ist$ denote the choice set associated with the profile $U$ by a given method.
% A choice method is a rule $X$ that assigns a non-empty subset of $\ist$ to every profile.
This method being upper semicontinuous means that every profile $U$ has a neighbourhood $\cal N$ in $\mathbb R_+^K$ such that
$\tilde U\in \cal N$ implies $X(\tilde U) \subseteq X(U)$.

\smallskip
In the following we will often make use of an equivalent formulation.
For every option $x\in\ist,$ let $\profilex(x)$ denote the set of profiles $U$ such that $X(U)$ contains $x$.
When $\ist$ is finite, as in our case, the upper semicontinuity of $X$ is equivalent to the closedness of $\profilex(x)$ for every $x\in\ist$
(see for instance \citealt[\S\,7.1.4]{klein-thompson:1984}).
That~is, for every sequence of profiles $U_n\ (n=1,2,...)$ that converges to a limit $U$,
if~$U_n \in \profilex(x)$ for all $n$, then $U \in\profilex(x).$

\smallskip
Obviously, the preference scores $V_{xy}$ depend continuously on the profile $U$.
Therefore, for the methods that are based on the Llull matrix
it suffices to check for the upper semicontinuity with respect to this matrix.
For such methods we will denote by $\llullx(x)$ the set of Llull matrices $V$ for which $x$ is a choice by the method under consideration.

%Proof
%|
%A un conjunt finit, f:R^n_+ \rightarrow Subconjunts de A.
%Per a cada a\in A posem \Gamma_a = el conjunt de x\in R^n_+ tal que f(x) inclou a.
%|
%Proposició. \Gamma_a tancat per a tota a\in A  sii  f usc.
%Demo ->. Suposem f no usc: hi ha una x i x_n \rightarrow x tals que f(x_n) no inclòs a f(x).
%Sigui a_n \in f(x_n) \ f(x). Com que A és finit, existeix a \in A tal que a_n coincideix amb a infinites vegades. Considerem aquesta subsuccessió. La hipòtesi que \Gamma_a és tancat ens diu que a\in f(x), contradient l'elecció que hem fet a_n \in f(x_n) \ f(x).
%Demo <-. Considerem x_n \rightarrow x i x_n\in\Gamma_a i.\,e.~f(x_n) inclou a. El fel que f sigui usc ens diu que a \in f(x) i.\,e.~x\in\Gamma_a.
%|
%Proposició. Considerem  x(t) \in R^n_+ funció contínua del paràmetre t\in[0,1]. Si f(x(t))={a} per a t<1/2 i f(x(t))={b} per a t>1/2, llavors f(x(1/2)) = {a,b}.
%Immediat

\medskip
\begin{theorem}
\label{st:prac-usc}
The path-revised approval choice is upper semicontinuous.
%an upper semicontinuous function of the profile. % preference scores.
%Besides, for any $x\in\winners(V)$ there exists a sequence $V_n\rightarrow V$ with $V_n\neq V$ such that $x\in\winners(V_n)$ for all $n$.
\end{theorem}
\begin{proof}\hskip.5em
Let us assume that $\llullx(x)\ni V_n\rightarrow V.$
We want to see that $V\in\llullx(x).$
This is easily checked by considering the inequalities
$(V_n)^*_{x\dft} - (V_n)^*_{\dft x} \ge (V_n)^*_{y\dft} - (V_n)^*_{\dft y}$
that appear in the definition of $x$ being a path-revised approval choice,
and using the fact that the path scores are continuous functions of the original preference scores.
%The second statement of the theorem amounts to say that $\llullx(x)$ contains no isolated points.
%This fact can be obtained as a consequence of the monotonicity property. In fact, by slightly increasing $V_{xy}$ and/or decreasing $V_{yx}$ for some $y$ we are ensured to keep having $x$ as a path-revised approval choice. This covers any $V$ except for those where $V_{xy} = 1$ and $V_{yx}=0$ for all $y\neq x.$ However, the latter can easily be taken care of by multiplying all the preference scores by a factor $1-\epsilon.$
\end{proof}

%\vskip-7pt
%\remark It is not difficult to show also that for any $x\in\winners(V)$ there exists a sequence $V_n\rightarrow V$ with $V_n\neq V$ such that $x\in\winners(V_n)$ for all $n$. In other words, the set $\Omega(x)$ contains no isolated points. Together with the preceding theorem, this allows to write the following equality, where $B(V,r)$ stands for the ball or radius $r$ centered at $V$:
%\begin{equation}
%\winners(V) \,=\, \bigcap_{r>0}\ \bigcup_{W\in B(V,r)\setminus\{V\}} \winners(W).
%\end{equation}

\paragraph{\textsl{Approval Choice} (\S\,\ref{ssec:should-a-small}) is upper semicontinuous.} \conpar
Let us assume that $\llullx(x)\ni V_n\rightarrow V.$
We want to see that $V\in\llullx(x).$
This is easily checked by considering the inequalities
$(V_n)_{x\dft} - (V_n)_{\dft x} \ge (V_n)_{y\dft} - (V_n)_{\dft y}$
that appear in the definition of $x$ being an approval choice
and letting $n$ tend to~$\infty$.

\paragraph{\textsl{Condorcet-Bucklin} (\S\,\ref{ssec:Condorcet-practical-methods}) is not upper semicontinuous.} \conpar
The following is easily checked to be a counterexample, both for the version where majorities are understood in the strict sense
and for that where they are understood in the weak sense:
$3+\epsilon:~c\better a\better b\aprbar,\ \ 4-\epsilon:~b\better a\better c\aprbar,\ \ 1:~a\better c\better b\aprbar.$

\paragraph{\textsl{Instant Runoff} (\S\,\ref{ssec:apv-instant-runoff}) is upper semicontinuous.} \conpar
Let us assume that $\profilex(x)\ni U_n\rightarrow U.$
We want to see that $U\in\profilex(x).$
For every $n$ there exists a sequence of $N-1$ eliminations that leads to $x$, where $N$ denotes the number of options. Since the sequences of $N-1$ options form a finite set, one can assume ---by extracting a subsequence of the $U_n$--- that the sequence of eliminations is the same for all $n$. Let $a$ be the first eliminated option, $b$ the second one, and so on.

Option~$a$ being eligible for the first elimination means that $\phi(a;U_n) \le \phi(z;U_n)$ for all $z\in A\setminus\{a\}$, where $\phi(z;U)$ stands for the number of first places of $z$ as a function of the profile $U$. Now, for every $z\in A,$ $\phi(z;U)$ depends continuously on $U$. Therefore, by letting $n\rightarrow\infty$ we get $\phi(a;U) \le \phi(z;U)$ for all $z\in A\setminus\{a\}$.

Similarly, option~$b$ being eligible for the second elimination means that $\phi(b\,|\,a;U_n) \le \phi(z\,|\,a;U_n)$ for all $z\in A\setminus\{a,b\}$, where $\phi(z\,|\,a;U)$ stands for the number of first places of $z$ after having eliminated $a$ as a function of the profile $U$. Again the continuous dependence on the profile allows to derive the same inequality for the limiting profile~$U$.

By repeating the same argument, we see that the limiting profile $U$ admits the same sequence of eliminations, and therefore admits $x$ as a choice.

\paragraph{\textsl{Path-Top Choice} (\S\,\ref{ssec:osi-ptc}) is upper semicontinuous.} \conpar
Let $T(V)$ denote the set of path-top choices as a function of the Llull matrix~$V$.
Let us assume that $\llullx(x)\ni V_n\rightarrow V.$
We want to see that $V\in\llullx(x),$ that is, $x\rkrelsub{V}y$ for any $y\neq x$
(where the subindex indicates which Llull matrix are we talking about).
We know that $x\rkrelsub{V_n}y$.
This means that for every $n$ there exists a path $x_0x_1\dots x_m$
of length $m \le N$ ($N$ being the number of options)
 from $x_0=x$ to $x_m=y$
such that $(V_n)^*_{x_ix_{i+1}} \ge (V_n)^*_{x_{i+1}x_i}$ for~all $i$.
The path in question may depend on $n$. However, since the possible paths % from $x$ to $y$
are finite in number, we can assume ---by extracting a subsequence---
that we are dealing with the same path for all $n$.
Now, since the path scores are continuous functions of the original preference scores,
the preceding non-strict inequalities remain true in the limit $n\rightarrow\infty$,
which ensures that $x\rkrelsub{V}y$, as we wanted to show.

\paragraph{\textsl{Strong Path-Top Choice} (\S\,\ref{ssec:osi-ptc}) is upper semicontinuous.} \conpar
Let $S(V)$ denote the set of strong path-top choices as a function of the Llull matrix~$V$.
Let us assume that $\llullx(x)\ni V_n\rightarrow V.$
We want to see that $V\in\llullx(x),$ that is, $x\gerelsub{V}y$ for any $y\neq x$.
% (where the subindex indicates which Llull matrix are we talking about).
We know that $x\gerelsub{V_n}y$.
This means that for every $n$ one has $(V_n)^*_{xy} \ge (V_n)^*_{yx}$.
Now, since the path scores are continuous functions of the original preference scores,
the preceding non-strict inequalities remain true in the limit $n\rightarrow\infty$,
which ensures that $x\gerelsub{V}y$, as we wanted to show.

\paragraph{The \textsl{Swiss Procedures} (\S\,\ref{ssec:Swiss-procedure}) are not upper semicontinuous.} \conpar
As we saw in Section~1.7, \eqref{eq:ex4e} is a counterexample if majorities are understood in the strict sense,
and \eqref{eq:ex4} is a counterexample both for the version where majorities are understood in the strict sense
and for that where they are understood in the weak sense.

\paragraph{\textsl{Preference-Approval Voting} (\S\,\ref{ssec:pba}) is not upper semicontinuous.} \conpar
Examples \eqref{eq:ex4e} and \eqref{eq:ex4} are again counterexamples.

\paragraph{\textsl{Approval Voting with a Runoff} (\S\,\ref{ssec:pba}) is upper semicontinuous.} \conpar
Let us assume that $\llullx(x)\ni V_n\rightarrow V.$
We want to see that $V\in\llullx(x).$
We know that, for every $n$, there exists $z\neq x$ such that $(V_n)_{x\dft}-(V_n)_{\dft x} \ge (V_n)_{t\dft}-(V_n)_{\dft t}$ and $(V_n)_{z\dft}-(V_n)_{\dft z} \ge (V_n)_{t\dft}-(V_n)_{\dft t}$ for all $t\notin\{\,x,z\,\}$; besides, $(V_n)_{xz} \ge (V_n)_{zx}.$ In principle $z$ may depend on~$n$. However, by extracting a subsequence we can assume that it is the same for all~$n$. Once more, the desired result follows by letting $n\rightarrow\infty$.

\subsection{Pareto consistency}
\label{ssec:osi-pareto}

\noindent
As in Section~\ref{ssec:pareto},
\textit{here we make the standing assumption that the votes have the form of truncated rankings with the possibility of ties}.
We also assume that such votes are interpreted as in the following way (see Section~\ref{ssec:ties-and-incomplete}):
a~vote where two options $x$ and $y$ are really ranked equally with each other is
assimilated to half a vote where $x$ is preferred to $y$ together with half another vote with the reverse preference;
any unranked option is less preferred than any ranked one;
nothing is inferred about the preference between two unranked options.

An option $x$ will be said to \dfc{dominate} another one $y$ in the sense of Pareto when
every voter either prefers $x$ to $y$ or ranks them equally, but at least one voter strictly prefers $x$ to $y$.
By definition, a choice method being \dfc{strongly Pareto consistent} means that
it never chooses an option that is dominated by another.
In contrast, \dfc{weak Pareto consistency} means the following: if $x$ dominates $y$
and $y$ is a possible choice, then $x$ is also a possible choice.

\begin{lemma}
\label{st:pareto-lem}
The following inequalities hold whenever $x$ dominates $y$ in the sense of Pareto:
\begin{alignat}{3}
\label{eq:xy}
V_{xy} &> V_{yx} &&&& \\
\label{eq:xa}
V_{xa} &\ge V_{ya},&\quad V_{ax} &\le V_{ay},\qquad&&\text{for any }a\in A\setminus\{\,x,y\,\}. \\
\label{eq:xystar}
V^*_{xy} &\ge V^*_{yx} &&&& \\
\label{eq:xastar}
V^*_{xa} &\ge V^*_{ya},&\quad V^*_{ax} &\le V^*_{ay},\qquad&&\text{for any }a\in A\setminus\{\,x,y\,\}.
\end{alignat}
\end{lemma}
\begin{proof}
Consider all possibilities for the preferences about $x,y$ and $a$ in a truncated ranking that expresses either $x>y$ or $x\sim y$. Altogether, there are nine such possibilities, which are collected in Table~\ref{table:pareto}. Notice that the truncated rankings might not show explicitly all the information that is given here; for instance, a truncated ranking that says ``$x>y$'' without mentioning $a$ belongs to possibility~1; in particular, possibility~6 corresponds to a truncated ranking that mentions neither $y$ nor $a$. For each possibility, the table shows the contribution of a vote of that kind to each of the preference scores $V_{xa},V_{ax},V_{ya}$ and $V_{ay}$. Therefore, if $\alpha_k\ge0$ denotes the number of votes of type $k$, one has $V_{xa} = \sum_k \alpha_k \smash{v^{(k)}_{xa}},$ and analogously for $V_{ax},V_{ya}$ and~$V_{ay}$.

\medskip

\newcommand\strutt{\rule[-5pt]{0pt}{16.5pt}}
\newcommand\struttt{\rule[-5pt]{0pt}{20pt}}

\begin{table}[h]
\begin{center}
\begin{tabular}{|c|c|c|c|c|c|}
\hline\struttt
$k$& preferences &$v^{(k)}_{xa}$&$v^{(k)}_{ya}$&$v^{(k)}_{ax}$&$v^{(k)}_{ay}$\cr
\hline\strutt
1& $x  >  y  >  a$ &1 &1 &0 &0\cr
\hline\strutt
2& $x  >  y\sim a$ &1 &\onehalf &0 &\onehalf \cr
\hline\strutt
3& $x  >  a  >  y$ &1 &0 &0 &1 \cr
\hline\strutt
4& $a\sim x  >  y$ &\onehalf &0 &\onehalf &1 \cr
\hline\strutt
5& $a  >  x  >  y$ &0 &0 &1 &1 \cr
\hline\strutt
6& $x  >  y,a$     &1 &0 &0 &0 \cr
\hline\strutt
7& $x\sim y  >  a$ &1 &1 &0 &0 \cr
\hline\strutt
8& $x\sim a\sim y$ &\onehalf &\onehalf &\onehalf &\onehalf \cr
\hline\strutt
9& $a  >  x\sim y$ &0 &0 &1 &1 \cr
\hline
\end{tabular}
\end{center}
\captionsetup{width=10cm}
\caption{The nine possibilities for the preferences about $x,y$ and $a$
when $x$~dominates $y$ in the sense of Pareto.}
\label{table:pareto}
\end{table}

The inequalities~\eqref{eq:xa} are an immediate consequence of the fact that they hold for every~$k$.
Let us consider now the first of the inequalities~\eqref{eq:xastar}. In order to arrive at it, one can proceed in the following way,
\begin{equation}
\begin{split}
V^*_{ya} &= \min\,(V_{y{a_1}},V_{{a_1}{a_2}},\dots,V_{{a_{n-1}}a})\\
 &\le\, \min\,(V_{x{a_1}},V_{{a_1}{a_2}},\dots,V_{{a_{n-1}}a}) \,\le\, V^*_{xa},
\end{split}
\end{equation}
where \,$y\,a_1 a_2 \dots a_{n-1} a$\, is a path that realizes the maximum that defines $V^*_{ya}$ ---see equation~\eqref{eq:vstar}---
and we have used the first of the inequalities~\eqref{eq:xa}. An analogous argument yields the second of the inequalities~\eqref{eq:xastar}.
Finally, inequality \eqref{eq:xystar} can be obtained in the following way:
if $V^*_{yx} = V_{yx},$ it suffices to notice that $V^*_{yx} = V_{yx} < V_{xy} \le V^*_{xy},$ where the central strict inequality is an immediate consequence of the hypothesis that $x$ dominates $y$ in the sense of Pareto; otherwise, one can write
\begin{equation}
\begin{split}
V^*_{yx} &=\, \min\,(V_{y{a_1}},V_{{a_1}{a_2}},\dots,V_{{a_{n-1}}x})\\
 &\le\, \min\,(V_{x{a_1}},V_{{a_1}{a_2}},\dots,V_{{a_{n-1}}y}) \,\le\, V^*_{xy},
\end{split}
\end{equation}
where \,$y\,a_1 a_2 \dots a_{n-1} x$\, is a path that realizes the maximum that defines $V^*_{yx}$,
and we have used both inequalities~\eqref{eq:xa}.
\end{proof}

%\remark
%In the case of strict preferences, inequality \eqref{eq:xystar} holds in the strict form.
%This fact can be obtained as a consequence of Theorem~3.11 of \citealt{dp}.

\textit{In the remainder of this section we assume that $x$ dominates $y$ in the sense of Pareto}.

\bigskip
\begin{theorem}
\label{st:prac-pareto}
The set of path-revised approval choices is weakly Pareto consistent.
\end{theorem}
\begin{proof}
It suffices to notice that the inequalities \eqref{eq:xastar} with $a=0$ entail the inequality
\begin{equation}
V^*_{x0} - V^*_{0x} \,\ge\, V^*_{y0} - V^*_{0y}.
\end{equation}
This inequality ensures that $y$ cannot be chosen without $x$ being chosen too.
\end{proof}

\paragraph{\textsl{Approval Choice} (\S\,\ref{ssec:should-a-small}) is weakly Pareto consistent.} \conpar
The inequalities \eqref{eq:xa} with $a=0$ result in $V_{x\dft}-V_{\dft x} \ge V_{y\dft}-V_{\dft y},$
% (or perhaps $V_{x\dft} \ge V_{y\dft}$)
which entais that $y$ cannot be chosen without $x$ being chosen too.
The strong version of Pareto consistency is not satisfied since the Approval Choice disregards any preferential information.
 
\paragraph{\textsl{Condorcet-Bucklin} \,(\S\,\ref{ssec:Condorcet-practical-methods}) \,is \,strongly \,Pareto \,consistent.}
%%%%%
Condorcet-Bucklin assumes truncated rankings \textit{without ties}. The hypothesis of the Pareto condition entails that $y$ cannot have first placings at all. The only way for $y$ to win is having a majority of $k$-th placings or better for some $k\ge 2$, but in that case $x$ will have a majority of $(k-1)$-th placings or better.

\paragraph{\textsl{Instant Runoff} (\S\,\ref{ssec:apv-instant-runoff}) is strongly Pareto consistent.} \conpar
Again, Instant Runoff assumes truncated rankings without ties, which entails that $y$ cannot have any first placings. So $y$ is one of the candidates to be eliminated. It the eliminated option is another one, it must be also an option with no first placings. After this elimination we are again in the situation where $y$ has no first placings. Since there are only a finite number of options, $y$ will eventually be the only option with no first placings, and then it will be eliminated.

\paragraph{\textsl{Path-Top Choice} (\S\,\ref{ssec:osi-ptc}) is weakly Pareto consistent.} \conpar
This is an immediate consequence of inequality~\eqref{eq:xystar}.
In fact, according to \S\,\ref{ssec:osi-ptc}, this inequality says that $x\succeq y$.
Therefore, having $y\rkrel z$ for any $z\neq y$ implies $x\rkrel z$ for any $z\neq x$.

\newcommand\vthns{\kern-.1pt\ }
\paragraph{\textsl{Strong\vthns Path-Top\vthns Choice}\vthns (\S\,\ref{ssec:osi-ptc})\vthns is\vthns weakly\vthns Pareto\vthns consistent.} \conpar
See \citealt[\S\,4.3.2]{schulze:2011}.
% 2003 sembla que deia strong !
The weak version of Pareto consistency is an immediate consequence of the inequalities \eqref{eq:xystar} and \eqref{eq:xastar} together with the equality that defines the set of strong path-top choices, namely $S = \{\,x\mid V^*_{xy} \ge V^*_{yx}, \forall y\neq x\,\}$.

As far as we know, there is no proof nor a counterexample of the strong version of Pareto consistency. 

\paragraph{The \textsl{Swiss Procedures} (\S\,\ref{ssec:Swiss-procedure}) are strongly Pareto consistent.} \conpar
Assume that $x$ dominates $y$ in the sense of Pareto. Assume also that $y$ is chosen.
To begin with, this requires $V_{y\dft}-V_{\dft y} \ge 0$. By \eqref{eq:xa}, this implies $V_{x\dft}-V_{\dft x} \ge 0.$
So both $x$ and $y$ and possibly some other non-default option are considered in the second step where the Copeland rule is applied.
Now inequalities \eqref{eq:xy} and \eqref{eq:xa} entail that $x$ wins $y$ under Copeland rule. So $y$ cannot be chosen.
% Also for Borda's tie-break $\sum_{t\neq x} V_{xt}$, used in Bern, as well as that used in Nidwalden $V_{x\dft}$.

\paragraph{\textsl{Preference-Approval Voting} (\S\,\ref{ssec:pba}) is weakly Pareto consistent.} \conpar
The inequalities \eqref{eq:xa} entail that the approval information will select $x$ whenever it selects $y$.
%The argument in~A.4.1 covers the case where no option attains a relative majority of approvals.
%In the other case, the same inequalities show that $x$ will also go to the second criterion.
Now, inequality $V_{xy} > V_{yx}$ ensures that $y$ cannot be a Condorcet winner.
On the other hand, it also entails that $x$ will be in the top cycle whenever $y$ is there.
Therefore we are led back to the approval scores.
Again, the approval information will select $x$ whenever it selects $y$.
% The argument is completed similarly to \S\,{A.4.1}.

The strong version of Pareto consistency is not satisfied. A counterexample is the following:
20~$x \sim y \better z \better t \aprbar$, 20~$t \better x \sim y \better z\aprbar$, 20~$z \better t \better x \sim y \aprbar$, 1~$t \better z \better x \better y \aprbar$. As one can check, the Preference-Approval Voting choice includes all options in spite of the fact that $x$ dominates $y$.

\paragraph{\textsl{Approval Voting with a Runoff} (\S\,\ref{ssec:pba}) is weakly Pareto consistent.} \conpar
Option $y$ being a winner entails that there exists $z$ so that $V_{y\dft} - V_{\dft y} \ge V_{t\dft} - V_{\dft t}$ and $V_{z\dft} - V_{\dft z} \ge V_{t\dft} - V_{\dft t}$ for any $t\neq y,z$. The inequality that is claimed in~A.4.1 shows that either (a)~$z=x$ or (b)~$V_{z\dft} - V_{\dft z} \ge V_{y\dft} - V_{\dft y} = V_{x\dft} - V_{\dft x} \ge V_{t\dft} - V_{\dft t}$ for any $t\neq x,y,z$. In case~(a) the inequality $V_{xy} > V_{yx}$ entails that $x$ is the only winner. In case~(b) the pair $x,z$ is as entitled as $y,z$ for the runoff, and the inequalities \eqref{eq:xa} entail that $V_{xz}-V_{zx} \ge V_{yz}-V_{zy}$, so that $y$ cannot be chosen (from $\{y,z\}$) without $x$ being also chosen (from $\{x,z\}$).

\subsection{Condorcet's last desideratum}
\label{ssec:osi-cld}

In this section we prove the following statement made in \S\,\ref{ssec:Condorcet-last-desideratum},
where the term `confidence' should be understood in the sense of \S\,\ref{ssec:path-scores}:
if $x$ is a path-revised approval choice and $y$ is not,
then the confidence that $x$ is preferable to the default option
is greater than or equal to the confidence that $y$ is preferable to $x$;
besides, this inequality is strict as soon as the path-revised approval choice is a proper one.
This statement is an immediate corollary of the theorem below.

\begin{lemmaN}
\label{st:minInequality}
$V^*_{xz} \,\ge\, \min\,(V^*_{xy}, V^*_{yz})$\, for any pairwise different $x,y,z$.
\end{lemmaN}
\begin{proof}
This follows from the definition of the path scores ---equations (\ref{eq:vstar}--\ref{eq:vgamma})--- by considering the path from $x$ to $z$ which is obtained by concatenating those from $x$ to $y$ and from $y$ to $z$ that produce the respective values of $V^*_{xy}$ and $V^*_{yz}$.
\end{proof}

\xsmallskip
\begin{lemmaN}\ensep
\label{st:laia}
If $V^*_{yz} > V^*_{xz}$, then $V^*_{xz} \ge V^*_{xy}$.\ensep
If $V^*_{xy} > V^*_{xz}$, then $V^*_{xz} \ge V^*_{yz}$.
\end{lemmaN}
\begin{proof}
These implications are immediate consequences of Lemma~\ref{st:minInequality}.
\end{proof}

\xsmallskip
\begin{theorem}
\label{st:prac-topranked}
%%%XMG:
If $V^*_{x0} - V^*_{0x} \ge 0$ and $V^*_{x0} - V^*_{0x} > V^*_{y0} - V^*_{0y}$,
then $V^*_{x0} - V^*_{0x} \ge V^*_{yx} - V^*_{xy}$.
If the first inequality is strict, then the last one is also strict.
%%%:XMG
\end{theorem}
\newcommand\ged{\operatorname{\dot\ge}} %%%XMG:
\newcommand\gtd{\operatorname{\dot>}} %%%XMG:
\begin{proof}\hskip.5em
%%%XMG:
Consider for the moment the first statement, where the first and third inequalities are not strict.
%%%:XMG
We will argue by contradiction. More specifically, we will arrive at contradiction from the following inequalities
\begin{align}
\label{eq:laia1a}
V^*_{x0} \,&\ged\, V^*_{0x}, \\[2.5pt] %%%XMG:
%\label{eq:laia1b}
%V^*_{y0} \,&\ge\, V^*_{0y}, \\[2.5pt]
\label{eq:laia2}
V^*_{x0} + V^*_{0y} \,&>\, V^*_{0x} + V^*_{y0}, \\[2.5pt]
\label{eq:laia3}
V^*_{0x} + V^*_{yx} \,&\gtd\, V^*_{x0} + V^*_{xy}, %%%XMG:
\end{align}
%%%XMG:
where these inequalities are equivalent respectively to the two inequalities of the hypothesis
%%%:XMG
and to the negation of the conclusion.
%%%XMG:
The dots that appear on top of some inequality signs indicate that these signs should be switched from ``$\ge$'' to ``$>$'' and viceversa to prove the statement about strict inequalities.
%%%:XMG
We will distinguish two cases:\ensep
(a)~$V^*_{x0} \ge V^*_{yx}$;\ensep (b)~$V^*_{yx} > V^*_{x0}$.

Case~(a). Here we are assuming that
\begin{equation}
\label{eq:laia4}
V^*_{x0} \,\ge\, V^*_{yx}.
\end{equation}
This implies that $\min(V^*_{yx}, V^*_{x0}) = V^*_{yx}$. Therefore, by Lemma~\ref{st:minInequality}, 
\begin{equation}
\label{eq:laia5}
V^*_{y0} \,\ge\, V^*_{yx}.
\end{equation}
Now, by concatenating (\ref{eq:laia2}), (\ref{eq:laia5}) and (\ref{eq:laia3}), we get
\begin{equation}
V^*_{x0} + V^*_{0y} \,>\, V^*_{0x} + V^*_{y0} \,\ge\, V^*_{0x} + V^*_{yx} \,\gtd\, V^*_{x0} + V^*_{xy}.
%%%XMG: l'últim \gtd es pot canviar per \ge
\end{equation}
Therefore,
\begin{equation}
V^*_{0y} \,>\, V^*_{xy},
\end{equation}
and, by Lemma~\ref{st:laia},
\begin{equation}
\label{eq:laia6}
V^*_{xy} \,\ge\, V^*_{x0}.
\end{equation}
Finally, by concatenating (\ref{eq:laia4}), (\ref{eq:laia3}) and (\ref{eq:laia6}), we get
\begin{equation}
V^*_{0x} + V^*_{x0} \,\ge\, V^*_{0x} + V^*_{yx} \,\gtd\, V^*_{x0} + V^*_{xy} \,\ge\, 2V^*_{x0},%%%XMG:
\end{equation}
that is $V^*_{0x} \gtd V^*_{x0}$, %%%%XMG:
in contradiction with (\ref{eq:laia1a}).
%and because of (\ref{eq:laia1a}), $V^*_{0x} = V^*_{x0}$.
%However, by introducing this equality in (\ref{eq:laia2}) we get a contradiction with~(\ref{eq:laia1b}).

Case~(b). Here we are assuming that
\begin{equation}
\label{eq:laia7}
V^*_{yx} \,>\, V^*_{x0}.
\end{equation}
This implies that $\min(V^*_{yx}, V^*_{x0}) = V^*_{x0}$. Therefore, by Lemma~\ref{st:minInequality}, 
\begin{equation}
\label{eq:laia8}
V^*_{y0} \,\ge\, V^*_{x0}.
\end{equation}
Now, by concatenating (\ref{eq:laia2}) and (\ref{eq:laia8}), we get
\begin{equation}
V^*_{x0} + V^*_{0y} \,>\, V^*_{0x} + V^*_{y0} \,\ge\, V^*_{0x} + V^*_{x0}.
\end{equation}
Therefore,
\begin{equation}
V^*_{0y} \,>\, V^*_{0x},
\end{equation}
and, by Lemma~\ref{st:laia},
\begin{equation}
\label{eq:laia10}
V^*_{0x} \,\ge\, V^*_{yx}.
\end{equation}
However, by combining this inequality with (\ref{eq:laia7}) we get a contradiction with~(\ref{eq:laia1a}).

This completes the proof of the statement with non-strict inequalities.
The reader will easily check that switching between ``$\ge$'' to ``$>$'' in the dotted inequality signs
takes care of the statement with strict inequalities.
\end{proof}

% Define `acknowledgement' environment
\def\acknowledgement{\par\addvspace{30pt}\small\rmfamily
\trivlist \item[\hskip\labelsep{\bfseries Acknowledgements}]}
\def\endacknowledgement{\endtrivlist\addvspace{-6pt}}
\newenvironment{acknowledgements}{\begin{acknowledgement}}
{\end{acknowledgement}}

\begin{acknowledgements}
This work benefited from our attendance to the \textsl{III Jornadas de Trabajo sobre Sistemas de Votación} (Valdeavellano de Tera, Soria, April 13th, 2013) where a first version was presented. 
We are grateful to the organizers of that workshop for their kind invitation.
\end{acknowledgements}

% \vskip-11mm\null

\end{document}